\newtheorem{Theo}{Theorem}
\begin{document}
\title{Throughput of Hybrid UAV Networks\\ with Scale-Free Topology}
\author{Zhiqing Wei,
		Ziyu Wang, 
		Zeyang Meng,
		Ning Zhang,
		Huici Wu,
		Zhiyong Feng

\thanks{Zhiqing Wei, Zeyang Meng, Huici Wu, and Zhiyong Feng are with Beijing University of Posts and Telecommunications, Beijing, China 100876 (email: \{weizhiqing, mengzeyang, dailywu, fengzy\}@bupt.edu.cn).
	
	Ziyu Wang is with Amazon (China) Holding Company Limited, Beijing, China 100025 (email: ziyuwan@amazon.com).
	
	Ning Zhang is with the Department of Electrical and Computer Engineering, University of Windsor, Windsor, ON, N9B 3P4, Canada. (e-mail: ning.zhang@uwindsor.ca).
	
	Correspondence authors: Ziyu Wang, Huici Wu, and Zhiqing Wei.}}

\maketitle

\begin{abstract}
Unmanned Aerial Vehicles (UAVs) hold great potential to support a wide range of applications due to the high maneuverability and flexibility. Compared with single UAV, UAV swarm carries out tasks efficiently in harsh environment, where the network resilience is of vital importance to UAV swarm. The network topology has a fundamental impact on the resilience of UAV network. It is discovered that scale-free network topology, as a topology that exists widely in nature, has the ability to enhance the network resilience. Besides, increasing network throughput can enhance the efficiency of information interaction, improving the network resilience. Facing these facts, this paper studies the throughput of UAV Network with scale-free topology. Introducing the hybrid network structure combining both ad hoc transmission mode and cellular transmission mode into UAV Network, the throughput of UAV Network is improved compared with that of pure ad hoc UAV network. Furthermore, this work also investigates the optimal setting of the hop threshold for the selection of ad hoc or cellular transmission mode. It is discovered that the optimal hop threshold is related with the number of UAVs and the parameters of scale-free topology. This paper may motivate the application of hybrid network structure into UAV Network.
\end{abstract}
\begin{keywords}
Unmanned Aerial Vehicle; Scale-free Network; Throughput; Scaling Law; Network Resilience
\end{keywords}
\IEEEpeerreviewmaketitle

\section{Introduction}
In recent years, with the development of the technologies such as mobile communications, artificial intelligence, and automatic control, Unmanned Aerial Vehicles (UAVs) have been widely applied in many areas, such as reconnaissance, disaster rescue, wireless communication and logistics, due to the advantages of high maneuverability and flexibility \cite{Intro1}. The application of UAVs is showing a blowout trend, and the application areas are continuously expanding. However, due to the extremely limited endurance time, power, size, and the harsh environment where UAVs carry out tasks, it is difficult for a single UAV to complete tasks quickly and efficiently. Thus, the collaboration among UAV swarm is required \cite{Intro2}. 
Taking into account the harsh environment in which UAV swarm is applied, the ability of UAV swarm to defend against device faults or other possible attacks and maintain the reliability of services, namely, the network resilience \cite{Intro2.5}, is of vital importance to UAV swarm.
Hence, the network resilience of UAV swarm is a key factor restricting the wide application of UAV swarm \cite{Intro3} \cite{Intro4}.

On the one hand, the network topology has a fundamental impact on the resilience of UAV network. 
Scale-free network topology, as a topology that exists widely in nature, such as Internet and biological swarms, has attracted widespread attention. The power-law distribution of the degree of the nodes is one of the important characteristics of this kind of complex network \cite{ScaleFree_1,ScaleFree_2}. As a typical complex network, the large-scale UAV network is widely studied \cite{ScaleFree_3,ScaleFree_4}, where the large-scale UAV network is a scale-free network. For example, \cite{Intro9} verified that the large-scale network has scale-free characteristic naturally under rules of Reynolds Boids. On the other hand, the study on scale-free structure of UAV network is of great significance to improve the network performance. Because the scale-free structure has great robustness when facing intentional attacks \cite{Intro5, Intro6}. In an asymptotic sense, all network nodes need to be destroyed in order to destroy the scale-free network, which greatly improves the stability and reliability of UAV network. Tran \emph{et al.} \cite{Intro7} and Fan \emph{et al.} \cite{ScaleFree_3} studied the scale-free UAV network and found that randomly removing nodes has little effect on the connectivity of UAV network, so that the network has a higher tolerance for random attacks. To this end, in order to enhance the resilience of UAV swarm, the UAV network topology with scale-free network topology has been studied in depth. In \cite{Intro9}, inspired by the scale-free characteristics of bird flocks enhancing environmental response capabilities, Singh \emph{et al.} studied the scale-free characteristics of UAV swarm to improve the survivability of UAV swarm.

On the other hand, the network throughput has a correlation with network resilience.
The increase of network throughput will decrease the congestion probability and the communication delay.
As a result, the UAV swarm has a short response time to the interruption of the network, and the network resilience is improved.
In order to alleviate congestion and reduce network response time, 
Defense Advanced Research Projects Agency (DARPA) \cite{Intro10} released a project called Content-Based Mobile Edge Networking (CBMEN) to effectively improve network throughput and reduce delay.
Liu \emph{et al.} \cite{Intro10.5} modeled the relation between network resilience and network throughput, and maximized the throughput to improve the fault recovery ability of the network.

Therefore, it is shown that both network topology and network throughput can comprehensively affect the resilience of UAV network. In terms of the throughput of UAV network, Yuan \emph{et al.} studied the impact of the mobility of
UAVs on the link throughput of UAV Network \cite{Intro11}. Li \emph{et al.} \cite{Intro13} studied the throughput of air-to-air links and the multiple access channel (MAC) throughput of UAV network. Chetlur \emph{et al.} \cite{Intro14} studied the outage probability of the link of three-dimensional (3D) UAV network, and analyzed the resilience of the UAV network from the perspective of reliability. Gao \emph{et al.} \cite{Intro15} enhanced the throughput of UAV Network by optimizing the deployment of UAVs in 3D space. We studied the throughput of 3D UAV network in \cite{Intro16}, discovering that the UAV network throughput is a function of the path loss factor in 3D space, the number of nodes, the factor of contact concentration, and so on. However, the throughput of UAV network with scale-free topology was seldom studied. The research on the wireless network with scale-free topology is firstly reported in \cite{Intro17},
where the degree of node follows a power-law distribution, which is the feature of scale-free network. 
\cite{Intro17} has found the relation between network throughput and node degree. By assigning independent spectrum resources to nodes with high degrees, the throughput of the UAV network is improved. 
{\cite{Intro18} and \cite{Intro18.5} are preliminary works} of \cite{Intro17}, where the distribution of node's degrees follows uniform distribution. We studied the throughput of 3D scale-free network in \cite{Intro19}. Compared with \cite{Intro17}, \cite{Intro19} studied the impact of 3D network topology on the throughput of scale-free network. Besides, the optimal threshold of the degree is studied, and the separated resources are allocated to nodes with the degree larger than the threshold to enhance the network throughput.

To warp up, the scale-free topology improves the resilience of UAV network, and the enhancement of the throughput of UAV network is essential to improve the resilience of UAV network. In order to improve the throughput of UAV network, a hybrid UAV network is formed with the cooperation between UAV network and ground cellular network in this paper. Hybrid network is a combination of ad hoc network and cellular network \cite{Intro20}. As shown in Fig. 1, when the source node and the sink node are far away, the data can be transmitted with cellular mode. On the contrary, when the source node and the sink node are close, the data can be transmitted via ad hoc mode. Kumar \emph{et al.} proposed this model earlier in \cite{Intro21} and proved the improvement of network throughput through the hybrid network model. 

In this paper, the throughput of the hybrid UAV network with scale-free topology is studied. 
The contributions of this paper are as follows. 
\begin{itemize}
	{\item The 3D model is applied in this paper, which is more realistic than the two-dimensional (2D) model. The 3D model has mainly two differences compared with 2D model. Firstly, the relative relationship between UAV and base station (BS) is more practical. Considering the flight capability and spatial distribution characteristics of the UAV, the BS is located on the 2D plane, namely the ground, and the UAV is located in a the cubic 3D space, rather than simply assuming that the UAV and the BS are deployed on the same plane. Secondly, the difference on dimension causes the difference on power and segmentation of theoretical results between two models.}
	
	{\item The three dimensions of the scale-free characteristics are considered in the hybrid network, i.e., the probability of source nodes selecting contact group members follows the power-law distribution corresponding to the distance, the probability of source nodes communicating with contact group members follows the power-law distribution corresponding to the distance, and the number of contact group members follows a power-law distribution. Compared with the existing researches on the throughput of hybrid scale-free networks, such as \cite{Intro18.5}, the scale-free characteristic of the number of members in the contact group, i.e., the power-law exponent $\gamma$, is taken into consideration in this paper.}
\end{itemize}

This paper is organized as follows.
In Section \uppercase\expandafter{\romannumeral2},
the network model of hybrid UAV network with scale-free topology is introduced.
In Section \uppercase\expandafter{\romannumeral3}, the throughput of hybrid UAV Network with scale-free topology is derived.
The numerical results of the analytical results are shown in Section \uppercase\expandafter{\romannumeral4}. 
Finally, in Section \uppercase\expandafter{\romannumeral5}, we summarize this paper.

\section{Network Model}

The hybrid network is the combination of ad hoc network and cellular network.
The cellular network serves as backbone network.
As illustrated in Fig. \ref{fig_hybrid}, in hybrid UAV network, the BSs are distributed on ground, and the UAVs are uniformly distributed in the unit cube{\footnote{The square-area assumption and the division method of a 2D plane is described in \cite{FootNote_1}, which is equal to a Voronoi tessellation satisfying Remark 5.6 in \cite{FootNote_1}. The division method guarantees that there is at least one node in each small square when the number of nodes $n$ tends to infinity. Similarly, the division method of 3D unit cube in this paper can be analogized from the result mentioned above, which is also applied in \cite{FootNote_2}.}}.
When the distance between source and destination is small,
the information flow goes through ad hoc mode. However, when the distance between source and destination is
large, the information flow goes through
cellular mode.

\begin{figure}[!ht]
\centering
\includegraphics[width=0.49\textwidth]{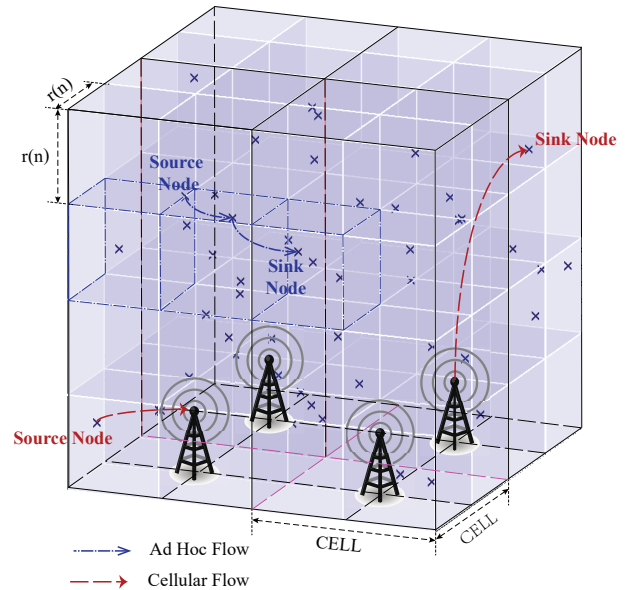}
\caption{Multi-hop routing strategy from source to destination.}
\label{fig_hybrid}
\end{figure}

\subsection{Communication model}

\subsubsection{Interference model}
\label{sec_interference_model}

In the UAV network, $n$ UAVs are uniformly distributed in the unit cube.
The unit cube is divided into small cubes
with side length
$\Theta \left( {{{\left( {\frac{{\log n}}{n}} \right)}^{{\textstyle{1 \over 3}}}}} \right)$\footnote{In this paper, $f( n ) = O(g(n))$ means that $\mathop {\lim }\limits_{n \to \infty } \frac{{f(n)}}{{g(n)}} < \infty $; $f\left( n \right) = \Omega (g(n))$ means that $g( n ) = O(f(n))$; $f( n ) = \Theta (g(n))$ means that $f( n ) = O(g(n))$ and $g( n ) = O(f(n))$, which is also denoted by $f(n) \equiv g(n)$.}.
{ According to Fig. \ref{fig_hybrid}, each small cube contains at least one node with high probability (\emph{w.h.p.})	
if the transmission range $r(n)$ between the two nodes is as follows \cite{19}.}

\begin{equation}\small
\label{eq_1}
r\left( n \right) = \Theta \left( {{{\left( {\frac{{\log n}}{n}} \right)}^{{\textstyle{1 \over 3}}}}} \right).
\end{equation}

We apply the \textsl{protocol model} \cite{02} for interference management.
Assuming that the 3D
Cartesian coordinates of the nodes $i$, $j$, $k$
are ${X_i}$, ${X_j}$, ${X_k}$ respectively,
two nodes can communicate successfully when
\begin{equation}\small
\left| {{X_i} - {X_j}} \right| < r\left( n \right),
\end{equation}
and the other nodes that transmit on the same frequency band satisfy the condition
\begin{equation}\small
\left| {{X_k} - {X_j}} \right| > \left( {1 + \Delta } \right)r\left( n \right),
\end{equation}
where $\Delta  > 0$ is the guard zone factor.

\subsubsection{Multiple access control}

In order to avoid multiple access interference, time division multiple access (TDMA) is adopted.
Suppose that the side length of each small cube is ${c_1}r\left( n \right)$, where
${c_1}$ is a constant smaller than $1$
to ensure that all nodes in the neighboring cubes are within the transmission range.
According to the interference model in Section \ref{sec_interference_model},
only the nodes within the intervals of $M$ cubes
are allowed to communicate simultaneously, where $M \ge \frac{{2 + \Delta }}{{{c_1}}}$.
Then ${M^3}$ cubes become a cluster,
and the cubes in the entire cluster are traversed in ${M^3}$ time slots
in a round-robin scheduling method.
The TDMA scheme in this paper is denoted by $M^3$-TDMA scheme.
As shown in Fig. \ref{TDMA-model},
the green cubes are located in different clusters,
and the nodes in these cubes can transmit data at the same time.

{Note that the analytical results under such protocol are also applicable to other multiple access control (MAC) protocol. Some studies have proved that MAC protocol will not affect the throughput scaling law. For example, \cite{MAC_1} studied the throughput bound scaling law of ad hoc network under Carrier Sense Multiple Access with Collision Avoid (CSMA/CA) protocol, which proved that the ad hoc network with CSMA/CA protocol has the same throughput  scaling law as the ad hoc network with TDMA protocol. }

\begin{figure}[!ht]
\centering
\includegraphics[width=0.49\textwidth]{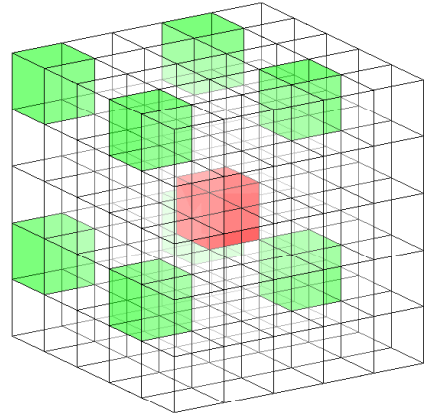}
\caption{The illustration of TDMA model where nodes in green cubes can transmit data at the same time.
In this figure, we assume that the source node is located in the red cube.}
\label{TDMA-model}
\end{figure}

\subsubsection{Data flows}
As illustrated in Fig. \ref{fig_hybrid}, BSs are distributed in a unit square on ground. The unit square is divided into $m =\Theta \left( {{{\left( {\frac{{\log n}}{n}} \right)}^{-{\textstyle{2 \over 3}}}}} \right)$ cells. And there is a BS at the center of each cell.
The volume of each cell is $1/m$. A UAV is associated with the nearest BS.
The BSs are connected via optical fiber with high throughput. Hence, there are no throughput limitations among the BSs.
In Fig. \ref{fig_hybrid}, there are two kinds of information flows, namely, ad hoc flow adopting multi-hop transmission and cellular
flow adopting BSs to transmit data.
The total bandwidth $W$ bits per second (bps) is divided into two parts, with $W_a$ bps allocated to ad hoc flows and $W_c$ bps allocated to cellular flows. Thus, we have

\begin{equation}\small
\label{eq_4}
W = {W_a} + {W_c}.
\end{equation}

\subsubsection{Routing scheme}

The $L$-routing scheme \cite{9.5} is applied in this paper.
If the number of hops from source to destination is smaller than $L$,
the ad hoc transmission mode is adopted.
Otherwise, the information is transmitted via cellular mode.
For the ad hoc information flow denoted by blue line in Fig. \ref{fig_hybrid},
the straight line routing is adopted,
the information is transmitted from source to
destination through the cubes passed through by the line
connecting source and destination.
For cellular flow, the source transmits data to the nearby BS.
Then, the data is transmitted to the BS associated to destination
and finally forwarded to the destination.

\subsection{Scale-free network model}
The network model of scale-free network consists of the distance based contact group model, the communication model of nodes and the distribution of the number of members in contact group.
The contact group of node $S$ is a collection of destination nodes that communicate with node $S$ over a period of time.
The distance based contact group construction describes the probability model of source node with specific contact group.
The communication model describes the communication probability of nodes in contact groups.
The number of members in contact group describes the probability model of the number of contact group members.

\subsubsection{Distance based contact group construction}
Source node $S$ selects any other nodes as the member
of its contact group $G$ with a power-law distribution
probability \cite{10}.
With ${d_i}$ denoting the distance between $S$ and node ${o_i}$,
the probability that ${o_i}$ is selected as a member of contact group follows power-law distribution as follows.
\begin{equation}\small
P\left( {D = {o_i}} \right) = {d_i}^{- \alpha },
\end{equation}
where $\alpha$ is a factor representing the concentration of the network, which is named as \emph{concentration factor} in this paper.
{ When $\alpha$ is large, the selection probability of contact group members attenuates greatly with distance, and members in the contact group tend to be located near the source node.}
The selection of the member of contact group is an independent process.
Thus, the probability that $G$ consists of nodes
${o_{{g_1}}},{o_{{g_2}}},...,{o_{{g_q}}}$ is \cite{Intro17}
\begin{equation}\small
\Pr \left( {G = \left\{{{o_{{g_1}}},...,{o_{{g_q}}}} \right\}} \right) = \frac{{d_{{g_1}}^{- \alpha }...d_{{g_1}}^{- \alpha }}}{{\sum\nolimits_{1 \le {i_1} < ... < {i_q} \le n} {d_{{i_1}}^{- \alpha }...d_{{i_q}}^{- \alpha }} }}.
\end{equation}

The denominator is an elementary symmetric polynomial
and can be denoted as \cite{Intro17}

\begin{equation}\small
{\sigma _q}\left( {{{\bf{d}}_{\bf{n}}}} \right) = \sum\nolimits_{1 \le {i_1} < ... < {i_q} \le n} {d_{{i_1}}^{- \alpha }...d_{{i_q}}^{- \alpha }},
\end{equation}
where ${{\bf{d}}_{\bf{n}}} = \left( {d_1^{- \alpha },...,d_n^{- \alpha }} \right)$ is
an $n$-dimensional vector.
Calculating the synthesis of all combinations,
the probability of an arbitrary particular node ${o_k}$ being
a member of $G$ is denoted by \cite{Intro17}
\begin{equation}\label{E1}\small
\Pr \left( {{o_k} \in G} \right) = \frac{{d_k^{- \alpha }{\sigma _{q - 1}}\left( {{\bf{d}}_{\bf{n}}^{\overline {\bf{k}} }} \right)}}{{{\sigma _q}\left( {{{\bf{d}}_{\bf{n}}}} \right)}},
\end{equation}
where ${\bf{d}}_{\bf{n}}^{\overline {\bf{k}} }$ is the
($n-1$)-dimensional vector except of the $k$-th element $d_k^{- \alpha }$.

\subsubsection{Communication model of nodes}

With the contact group established, the probability of source node $S$ choosing the destination node inside the contact group $G$ to
communicate also follows power-law distribution.
The probability of node $o_i$ selected to be
the destination is ${d_i}^{- \beta }$,
where the factor $\beta$ reveals the communication activity level {of the contact group}, which is named as \emph{communication activity factor} in this paper.
Thus, the probability that ${o_k}$ is the destination node $D$ in $G$ is
\begin{equation}\label{E2}\small
\Pr \left( {D = {o_k}\left| {{o_k} \in G} \right.} \right) = \frac{{d_k^{- \beta }}}{{\sum\nolimits_{i = 1}^q {d_{g_i}^{- \beta }} }} = \frac{{d_k^{- \beta }}}{{{\sigma _1}\left( {{{\bf{d}}_{\bf{q}}}} \right)}},
\end{equation}
where ${{\bf{d}}_{\bf{q}}} = \left( {d_{{g_1}}^{- \beta },...,d_{{g_q}}^{- \beta }} \right)$.
{When $\beta$ is large, the probability of communication destination selection decreases greatly with distance, and the source node tends to communicate with the node at a close location.}

\subsubsection{Number of members in contact group}
The number of members in the contact group, which is the degree of a node, is denoted by $d$.
Then, the probability density function (PDF) of $d$
follows a power-law distribution as follows.

\begin{equation}\small
P\left( {d = q} \right) \propto {q^{- \gamma }},
\end{equation}
where $q$ is a positive integer, $\gamma$ is the power-law exponent, which is named as \emph{clustering factor} in this paper. $A \propto B$ means that A is proportional to B. {When $\gamma$ is large, the number of contact group members is small.}

Assume that each source node $S$
has a contact group $G$ and the number of $G$'s members is a random variable $Q$.
The probability that $G$ has
$q\left( {q = 1,2,...,n} \right)$ members is \cite{Intro17}
\begin{equation}\label{E_total1}\small
\Pr \left( {Q = q} \right) = \frac{{{q^{- \gamma }}}}{{\sum\nolimits_{q = 1}^{n - 1} {{q^{- \gamma }}} }} = \frac{{{q^{- \gamma }}}}{{{\sigma _1}\left( {\bf{q}} \right)}},
\end{equation}
where ${\sigma _1}\left( {\bf{q}} \right)$ is an elementary symmetric polynomial, with ${\bf{q}} = \{1^{- \gamma}, 2^{- \gamma}, ..., (n-1)^{-\gamma} \}$.

\section{Throughput of Hybrid UAV Network}

The per-node throughput of ad hoc mode is denoted by $\lambda _a^n$ bps.
The per-node throughput of cellular mode is denoted by $\lambda _c^n$ bps.
Assuming that the number of ad hoc flows and cellular flows is $N_a$ and $N_c$ respectively,
the network throughput is 
\footnote{{The per-node throughput is denoted by superscript `$n$', while the throughput of the network has no superscript. The subscript `$a$' denotes ad hoc mode and the subscript `$c$' denotes cellular mode.}}
\begin{equation}\small
\lambda = {\lambda _a} + {\lambda _c} = {N_a}\lambda _a^n + {N_c}\lambda _c^n.
\end{equation}

\subsection{Network throughput of cellular mode}

The network throughput of cellular mode is related with the
number of BSs $m$ and the bandwidth $W_c$. We have the following theorem.

\begin{Theo}
The network throughput of cellular mode ${\lambda _{c}}$ satisfies the following equality.
\begin{equation}\small
{\lambda _c}  = \Theta \left( {m{W_c}} \right).
\end{equation}

\end{Theo}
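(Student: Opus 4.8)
The plan is to establish the two matching bounds $\lambda_c = O(mW_c)$ and $\lambda_c = \Omega(mW_c)$ separately and then combine them. The governing intuition is that the wired backbone among the BSs is unconstrained, so the bottleneck of every cellular flow is the wireless access hop between a UAV and its serving BS; consequently $\lambda_c$ is dictated by the aggregate capacity of the access links, of which there are $m$, each offering bandwidth $W_c$.

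For the converse, I would observe that every cellular flow must cross the wireless access interface on transmission (source $\to$ serving BS) and again on delivery (BS $\to$ destination). In any single time slot the protocol model, enforced through the $M^3$-TDMA schedule, limits the set of BSs that can transmit or receive simultaneously: cells lying in the same cluster are mutually exclusive, so only a $\Theta(1)$ fraction of the $m$ cells are active per slot, and each active BS carries rate at most $W_c$. Summing the per-slot aggregate over one full scheduling cycle then bounds the total access throughput, and hence $\lambda_c$, by $O(mW_c)$.

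For achievability I would exhibit an explicit schedule. Grouping the cells into $M^3$ spatial-reuse classes, in each slot all BSs of one class are activated; the guard-zone separation $\left|X_k - X_j\right| > (1+\Delta)r(n)$ guarantees that these simultaneous access transmissions are interference-free under the protocol model. Each BS is therefore active a $1/M^3 = \Theta(1)$ fraction of the cycle and attains an average access rate of $\Theta(W_c)$. Because $n/m \to \infty$, every cell contains UAVs w.h.p., so no scheduled BS is starved of traffic; summing over all $m$ BSs gives $\lambda_c = \Omega(mW_c)$, and together with the converse this yields $\lambda_c = \Theta(mW_c)$.

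The step I expect to be the main obstacle is making the spatial-reuse count tight in both directions: I must verify that the geometry of the $m$-cell partition together with the interference radius $r(n)$ admits $\Theta(m)$ simultaneously active BSs (needed for the lower bound) while forbidding asymptotically more (needed for the upper bound), and that the load imbalance across cells --- each holding $\Theta(n/m)$ UAVs --- does not depress the per-BS rate below $\Theta(W_c)$. The latter holds because each BS time-shares its bandwidth $W_c$ among its associated UAVs and we bound only the aggregate per-cell rate, not individual per-node rates.
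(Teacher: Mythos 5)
Your proposal is correct and follows essentially the same route as the paper: the per-cell cellular throughput is upper-bounded by the allocated bandwidth $W_c$ and lower-bounded by $W_c$ divided by a constant spatial-reuse factor (your $M^3$ reuse classes play the role of the paper's constant $x_{cells}$), so each cell attains $\Theta(W_c)$ and summing over the $m$ cells gives $\lambda_c = \Theta(mW_c)$. You simply make explicit the scheduling, guard-zone, and non-starvation details that the paper delegates to a citation.
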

\begin{proof}
{ According to the bandwidth allocation strategy in \eqref{eq_4}, the network throughput of each cell has upper bound $\lambda _c^m = O\left( {{W_c}} \right)$.}
Assuming that there are ${x_{cells}}$ cells sharing the same bandwidth $W_c$,
the lower bound of the throughput of each cell is $\lambda _c^m = \Omega \left( {{{{W_c}} \mathord{\left/
 {\vphantom {{{W_c}} {{x_{cells}}}}} \right.
 \kern-\nulldelimiterspace} {{x_{cells}}}}} \right)$, where ${x_{cells}}$ is a constant that is
independent with $n$ and $m$ \cite{22}.
Hence, the network throughput of each cell is $\lambda _{c}^m = \Theta \left( {{W_c}} \right)$.
Because there are totally $m$ cells, the network throughput contributed by
cellular mode is ${\lambda _{c}} = m\lambda _c^m = \Theta \left( {m{W_c}} \right)$.
\end{proof}

\subsection{Network throughput of ad hoc mode}\label{NC_of_adhoc}
The network throughput of ad hoc mode depends on the average number of hops of ad hoc flows passing through each small cube $E[F]$, where $F$ is the number of ad hoc flows contained in each small cube.
\begin{Theo}
	The per-node throughput of ad hoc mode satisfies the following equation.
	\begin{equation}\small
		\lambda_{a}^{n} \equiv \Theta\left(\frac{W_{a}}{E[F] M^{3}}\right)=\Theta\left(\frac{W_{a}}{E[F]}\right).
	\end{equation}
\end{Theo}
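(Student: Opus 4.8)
The plan is to prove the two-sided $\Theta$ bound by balancing, at the level of a single small cube, the transmission capacity that the $M^3$-TDMA schedule makes available against the aggregate traffic demand that straight-line routing imposes on that cube. First I would quantify the supply side. Under the $M^3$-TDMA scheme each cube belongs to a cluster of $M^3$ cubes that share the medium in round-robin fashion, so any fixed cube is activated in exactly one out of every $M^3$ slots, and while active it transmits over the full ad hoc band $W_a$. Hence the long-run service rate a single cube can offer to the flows it relays is $\Theta\!\left(W_a/M^3\right)$, where the protocol model of Section \ref{sec_interference_model} guarantees that each scheduled transmission succeeds because all simultaneously active transmitters lie in distinct clusters and are therefore separated by more than $(1+\Delta)r(n)$.

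Next I would quantify the demand side. With straight-line routing, a fixed cube must relay every ad hoc flow whose source--destination segment crosses it, and by definition the number of such flows is $F$, with mean $E[F]$. If all ad hoc flows are served at the common per-node rate $\lambda_a^n$, then the busiest cube $c$ must carry $\max_c F_c$ flows at total rate $\lambda_a^n \max_c F_c$, which cannot exceed its service rate $\Theta\!\left(W_a/M^3\right)$. This yields the upper bound $\lambda_a^n = O\!\left(W_a/(M^3 \max_c F_c)\right)$, and since $\max_c F_c \ge E[F]$ we obtain $\lambda_a^n = O\!\left(W_a/(E[F] M^3)\right)$.

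For the matching lower bound I would exhibit the schedule explicitly: cycle through the $M^3$ slots of each cluster, and within each cube's active slot serve its flows in round robin. This guarantees every flow a rate $\Omega\!\left(W_a/(M^3 \max_c F_c)\right)$. To convert this into the claimed $\Omega\!\left(W_a/(E[F] M^3)\right)$ I need the load to be balanced across cubes, i.e. $\max_c F_c = \Theta(E[F])$ w.h.p. I expect this concentration to be the main obstacle: the counts $F_c$ are sums of nearly independent indicators over the $N_a$ flows, so a Chernoff bound together with a union bound over the $\Theta(n/\log n)$ cubes should show that no cube carries more than a constant multiple of $E[F]$ with high probability, provided $E[F]$ grows at least logarithmically in $n$ --- precisely the regime that the $\log n$ factor in the cube side length is engineered to ensure.

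Combining the two directions gives $\lambda_a^n = \Theta\!\left(W_a/(E[F] M^3)\right)$. Finally, since $M \ge (2+\Delta)/c_1$ is a constant independent of $n$, the factor $M^3$ is absorbed into the $\Theta(\cdot)$ notation, which yields the second equality $\Theta\!\left(W_a/(E[F] M^3)\right) = \Theta\!\left(W_a/E[F]\right)$ and completes the proof.
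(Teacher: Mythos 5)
Your proposal follows the same basic accounting as the paper's proof: each cube is active in one of every $M^3$ slots and therefore offers service rate $\Theta(W_a/M^3)$, this is shared among the ad hoc flows relayed through the cube, whose number has mean $E[F]$, and dividing the two gives the claim, with the constant $M$ absorbed into the $\Theta(\cdot)$. The difference is one of rigor rather than of route. The paper simply asserts that ``because of the random distribution of nodes'' each cube contains $E[F]=N_a E[X] V$ flows, i.e., it identifies the realized per-cube load with its mean and stops there. You instead split the claim into an upper bound from the busiest cube and a lower bound from an explicit round-robin schedule, and you correctly observe that closing the gap between them requires the concentration statement $\max_c F_c = \Theta(E[F])$ w.h.p., obtained by a Chernoff bound plus a union bound over the $\Theta(n/\log n)$ cubes when $E[F]=\Omega(\log n)$. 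That step is exactly what the paper's one-line appeal to randomness is implicitly relying on, so your version makes explicit, and conditions precisely, what the paper leaves informal; the only caveat is that in the regimes where $E[F]=O(1)$ (which the paper later handles separately by setting $\lambda_a^n \equiv \Theta(W_a)$) the concentration argument does not apply as stated, so your proof, like the paper's, really covers only the growing-load regime.
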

\begin{proof}
	As mentioned above, the side length of the small cube is $c_{2} r(n)=\Theta\left((\log n / n)^{\frac{1}{3}}\right)$. Supposing that the number of hops from source to destination is $X$, where $X$ is a random variable, the average number of hops of one ad hoc flow is $E[X]$. Therefore, the average number of hops of all the ad hoc flows is ${N_a}E\left[ X \right]$. Because of the random distribution of nodes, each cube contains $E[F]=N_{a} E[X] V$ transmission flows, where $V = {\left( {{c_2}r\left( n \right)} \right)^3}$ is the volume of the small cube. According to the \emph{Multiple Access Protocol} in Section \uppercase\expandafter{\romannumeral2},
	the average bandwidth of each slot is $\frac{W_a}{M^3}$. Therefore, the per-node throughput $\lambda _a^n$ of each node is
	\begin{equation}\label{eq_16}\small
		\lambda _a^n \equiv \Theta \left( {{{{W_a}} \over {E\left[ F \right]M^3}}} \right){\rm{}} = \Theta \left( {{{{W_a}} \over {E\left[ F \right]}}} \right).
	\end{equation}
\end{proof}
Therefore, the network throughput contributed by ad hoc mode is ${\lambda _a} = {N_a}\lambda _a^n$.

\subsection{Node and flow classification}
The structure of cubes with $x$ hops away from a source node is a octahedron, as the shaded cubes illustrated in Fig. \ref{Cubes}, which consists of $4{x^2} + 2$ cubes.
\begin{figure}[!ht]
	\centering
	\includegraphics[width=0.4\textwidth]{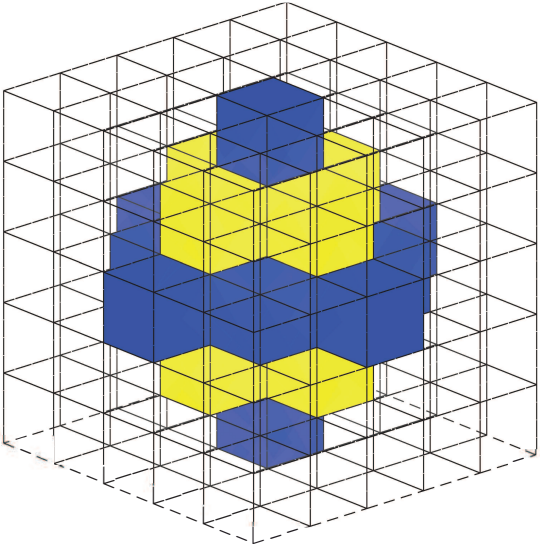}
	\caption{Cubes with $x$ hops away from the source node, where $x = 2$}
	\label{Cubes}
\end{figure}
$\Pr \left( {X = x} \right)$ represents the probability that the distance from the destination node $D$ to the source node $S$ is $x$ hops. According to (6) in \cite{Intro19}, we have
\begin{equation}\small
	\Pr \left( {X = x} \right) = \sum\nolimits_{l = 1}^{4{x^2} + 2} {\sum\nolimits_{{o_k} \in {c_l}} {\Pr \left( {D = {o_k}} \right)} },
\end{equation}
where $c_l$ is the set of the nodes in the cube that is $x$ hops away from the source node and $o_k$ is the destination node within it. Because the nodes are randomly distributed, the probability that any node is located in the cube is ${r^3}\left( n \right)$. { Therefore, the number of nodes contained in $c_l$ is $n{r^3}\left( n \right)$ on average.} Thus we have
\begin{equation}\label{eq_17}\small
	\Pr \left( {X = x} \right) = \sum\limits_{l = 1}^{4{x^2} + 2} {n{r^3}\left( n \right)\Pr \left( {D = {o_k}} \right)}.
\end{equation}

Because of the same power-law distribution as \cite{Intro19}, we use the same symbol as \cite{Intro19}, where $\alpha$ represents the concentration of the network, $\beta$ reveals the communication activity level, and $\gamma$ is the clustering factor.
According to (\ref{E2}) (\ref{eq_17}) in this paper and (7) in \cite{Intro19}, we have the following probability
\begin{equation}\label{eq_18}\small
	\Pr \left( {X = x} \right) = \sum\limits_{l = 1}^{4{x^2} + 2} {\sum\limits_{{o_k} \in {c_l}} {\sum\limits_{q = 1}^{n - 1} {{{{q^{- \gamma }}d_k^{- \alpha  - \beta }{\sigma _{q - 1}}\left( {{\bf{d}}_{\bf{n}}^{\overline {\bf{k}} }} \right)} \over {{\sigma _1}\left( {\bf{q}} \right){\sigma _1}\left( {{{\bf{d}}_{\bf{q}}}} \right){\sigma _q}\left( {{{\bf{d}}_{\bf{n}}}} \right)}}} } }.
\end{equation}

When the destination node is in the regions that are within $L$ hops to the source node $S$, the data flow is forwarded with ad hoc mode. The probability that a flow is an ad hoc flow is denoted by ${\rm Pr}^a$, then we have
\begin{equation}\label{eq_19}\small
	\begin{aligned}
		{{\rm Pr}^a} &= \sum\limits_{x = 1}^L {\Pr \left( {X = x} \right)}  \\&= \sum\limits_{x = 1}^L {\sum\limits_{l = 1}^{4{x^2} + 2} {\sum\limits_{{o_k} \in {c_l}} {\sum\limits_{q = 1}^{n - 1} {{{{q^{- \gamma }}d_k^{- \alpha  - \beta }{\sigma _{q - 1}}\left( {{\bf{d}}_{\bf{n}}^{\overline {\bf{k}} }} \right)} \over {{\sigma _1}\left( {\bf{q}} \right){\sigma _1}\left( {{{\bf{d}}_{\bf{q}}}} \right){\sigma _q}\left( {{{\bf{d}}_{\bf{n}}}} \right)}}} } } }.
	\end{aligned}
\end{equation}

When the destination node is in the regions that are more than $L$ hops to the source node, the data flow is forwarded with cellular mode. According to \cite{Intro19}, the maximum number of hops of each flow is $\Theta \left( {{r^{- 1}}\left( n \right)} \right)$. The probability that a flow is a cellular flow is denoted by ${\rm Pr}^c$. Then we have
\begin{equation}\small
	{{\rm Pr} ^c} = \sum\limits_{x = L + 1}^{{r^{- 1}}\left( n \right)} {\sum\limits_{l = 1}^{4{x^2} + 2} {\sum\limits_{{o_k} \in {c_l}} {\sum\limits_{q = 1}^{n - 1} {{{{q^{- \gamma }}d_k^{- \alpha  - \beta }{\sigma _{q - 1}}\left( {{\bf{d}}_{\bf{n}}^{\overline {\bf{k}} }} \right)} \over {{\sigma _1}\left( {\bf{q}} \right){\sigma _1}\left( {{{\bf{d}}_{\bf{q}}}} \right){\sigma _q}\left( {{{\bf{d}}_{\bf{n}}}} \right)}}} } } }.
\end{equation}

The feature of scale-free topology shows that a few
nodes have a large number of associated nodes.
Thus, a threshold $q_0$ of node degree is chosen that
classifies all the nodes into two classes.
The nodes whose degree $q > q_0$ are leader nodes,
and the nodes whose degree $q \le q_0$ are normal nodes.
We define ${\rm Pr}_1^a$ as the probability for
leader nodes transmitting with ad hoc mode and ${\rm Pr}_1^c$ for
leader nodes transmitting with cellular mode, then
\begin{equation}\small
	\label{eq_21}
	{\rm Pr}_1^a = \sum\limits_{x = 1}^L {\sum\limits_{l = 1}^{4{x^2} + 2} {\sum\limits_{{o_k} \in {c_l}} {\sum\limits_{q = {q_0} + 1}^{n - 1} {{{{q^{- \gamma }}d_k^{- \alpha  - \beta }{\sigma _{q - 1}}\left( {{\bf{d}}_{\bf{n}}^{\overline {\bf{k}} }} \right)} \over {{\sigma _1}\left( {\bf{q}} \right){\sigma _1}\left( {{{\bf{d}}_{\bf{q}}}} \right){\sigma _q}\left( {{{\bf{d}}_{\bf{n}}}} \right)}}} } } },
\end{equation}
\begin{equation}\small
	\label{eq_22}
	{\rm Pr}_1^c = \sum\limits_{x = L + 1}^{{r^{- 1}}\left( n \right)} {\sum\limits_{l = 1}^{4{x^2} + 2} {\sum\limits_{{o_k} \in {c_l}} {\sum\limits_{q = {q_0} + 1}^{n - 1} {{{{q^{- \gamma }}d_k^{- \alpha  - \beta }{\sigma _{q - 1}}\left( {{\bf{d}}_{\bf{n}}^{\overline {\bf{k}} }} \right)} \over {{\sigma _1}\left( {\bf{q}} \right){\sigma _1}\left( {{{\bf{d}}_{\bf{q}}}} \right){\sigma _q}\left( {{{\bf{d}}_{\bf{n}}}} \right)}}} } } }.
\end{equation}

Similarly, we define ${\rm Pr}_2^a$ as the probability for normal nodes transmitting with ad hoc mode and ${\rm Pr}_2^c$ as the probability for normal nodes transmitting with cellular mode, then
\begin{equation}\label{eq_pr2a}\small
	{\rm Pr}_2^a = \sum\limits_{x = 1}^L {\sum\limits_{l = 1}^{4{x^2} + 2} {\sum\limits_{{o_k} \in {c_l}} {\sum\limits_{q = 1}^{{q_0}} {{{{q^{- \gamma }}d_k^{- \alpha  - \beta }{\sigma _{q - 1}}\left( {{\bf{d}}_{\bf{n}}^{\overline {\bf{k}} }} \right)} \over {{\sigma _1}\left( {\bf{q}} \right){\sigma _1}\left( {{{\bf{d}}_{\bf{q}}}} \right){\sigma _q}\left( {{{\bf{d}}_{\bf{n}}}} \right)}}} } } },
\end{equation}
\begin{equation}\label{eq_pr2c}\small
	{\rm Pr}_2^c = \sum\limits_{x = L + 1}^{{r^{- 1}}\left( n \right)} {\sum\limits_{l = 1}^{4{x^2} + 2} {\sum\limits_{{o_k} \in {c_l}} {\sum\limits_{q = 1}^{{q_0}} {{{{q^{- \gamma }}d_k^{- \alpha  - \beta }{\sigma _{q - 1}}\left( {{\bf{d}}_{\bf{n}}^{\overline {\bf{k}} }} \right)} \over {{\sigma _1}\left( {\bf{q}} \right){\sigma _1}\left( {{{\bf{d}}_{\bf{q}}}} \right){\sigma _q}\left( {{{\bf{d}}_{\bf{n}}}} \right)}}} } } }.
\end{equation}

\subsection{Analysis of average number of hops}

As for ${\rm Pr}_1^a$ and ${\rm Pr}_1^c$, we have the following theorem.

\begin{Theo}
	$\operatorname{Pr}_{1}^{a}$ and $\operatorname{Pr}_{1}^{c}$ satisfy the following equations.
	\begin{equation}\label{eq_pr1a}\small
		\operatorname{Pr}_{1}^{a} \equiv\left\{\begin{array}{ll}
			\Theta\left(r^{3}(n) L^{3-\beta}\right) & 0 \leq \beta<3 \\
			\Theta\left(r^{3}(n) \ln L\right) & \beta=3 \\
			\Theta\left(r^{3}(n)\right) & \beta>3
		\end{array}\right.
	\end{equation}
\begin{equation}\label{eq_pr1c}\small
	\operatorname{Pr}_{1}^{c} \equiv\left\{\begin{array}{ll}
		\Theta\left(r^{\beta}(n)-r^{3}(n) L^{3-\beta}\right) & 0 \leq \beta<3 \\
		\Theta\left(r^{3}(n) \ln \left(\frac{1}{\operatorname{Lr}(n)}\right)\right) & \beta=3 \\
		\Theta\left(r^{3}(n)\right) & \beta>3
	\end{array}\right.
\end{equation}
\end{Theo}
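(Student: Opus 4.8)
The plan is to reduce the quadruple sum defining $\operatorname{Pr}_1^a$ in \eqref{eq_21} to a single sum over hop-shells and then finish by an elementary case analysis in $\beta$; the same template, run over the complementary hop range, gives $\operatorname{Pr}_1^c$.

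First I would read the summand $\frac{q^{-\gamma}d_k^{-\alpha-\beta}\sigma_{q-1}(\mathbf{d}_n^{\overline{k}})}{\sigma_1(\mathbf{q})\sigma_1(\mathbf{d}_q)\sigma_q(\mathbf{d}_n)}$ as the joint probability $\Pr(Q=q)\,\Pr(o_k\in G\mid Q=q)\,\Pr(D=o_k\mid o_k\in G,Q=q)$, so that the inner sum over $q$ from $q_0+1$ to $n-1$ is exactly $\Pr(D=o_k,\,Q>q_0)$. To evaluate it I would use the asymptotics of the elementary symmetric polynomials from \cite{Intro17,Intro19}: the inclusion ratio obeys $\sigma_{q-1}(\mathbf{d}_n^{\overline{k}})/\sigma_q(\mathbf{d}_n)\equiv q/\sigma_1(\mathbf{d}_n)$, and $\sigma_1(\mathbf{d}_q)$ concentrates on its conditional mean, so that the two explicit factors of $q$ cancel. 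The crucial point is that a leader carries a large contact group; its membership becomes near-certain, the $-\alpha$ selection washes out, and the surviving per-node destination law is governed only by the communication exponent, $\Pr(D=o_k,Q>q_0)\equiv g(q_0)\,d_k^{-\beta}$ up to a normalization. Pinning down the constant $g(q_0)$ and the normalization — and checking that the $\alpha$- and $\gamma$-dependence genuinely collapses at leading order — is where I expect the real work to be.

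With the summand reduced, I would discretize the spatial sum using the geometry already fixed: the set of cubes exactly $x$ hops from the source is an octahedron of $4x^2+2$ cubes, each holding $n\,r^3(n)$ nodes w.h.p., and every node in that shell lies at distance $d_k\equiv x\,r(n)$. Replacing $\sum_l\sum_{o_k\in c_l}$ by $(4x^2+2)\,n\,r^3(n)$ and $d_k$ by $x\,r(n)$ turns $\operatorname{Pr}_1^a$ into $\Theta\!\big(r^3(n)\sum_{x=1}^{L}x^{2-\beta}\big)$ and $\operatorname{Pr}_1^c$ into $\Theta\!\big(r^3(n)\sum_{x=L+1}^{r^{-1}(n)}x^{2-\beta}\big)$. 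The trichotomy is then forced by the exponent $2-\beta$ crossing $-1$ at $\beta=3$: $\sum_{x=1}^{L}x^{2-\beta}\equiv L^{3-\beta}$ for $\beta<3$, $\equiv\ln L$ for $\beta=3$, and $\equiv 1$ for $\beta>3$, which gives the three cases of $\operatorname{Pr}_1^a$. For $\operatorname{Pr}_1^c$ with $\beta<3$ I would write $\sum_{x=L+1}^{r^{-1}(n)}=\sum_{x=1}^{r^{-1}(n)}-\sum_{x=1}^{L}$ and use $\sum_{x=1}^{r^{-1}(n)}x^{2-\beta}\equiv r^{\beta-3}(n)$ to obtain $r^\beta(n)-r^3(n)L^{3-\beta}$; the $\beta=3$ case follows from $\sum_{x=L+1}^{r^{-1}(n)}1/x=\ln(r^{-1}(n))-\ln L=\ln(1/(Lr(n)))$.

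The main obstacle is the first stage, namely the asymptotic control of $\sigma_{q-1}/\sigma_q$ and $\sigma_1(\mathbf{d}_q)$ and the proof that, after summing the degree distribution over the leader range, the exponent reduces to $-\beta$ and the prefactor is exactly $r^3(n)$ rather than some other power of $r(n)$; the mean-field cancellation of $q$ must be shown to hold uniformly across $q\in(q_0,n-1]$, where large-$q$ leaders behave differently from those just above the threshold. By comparison, the spatial discretization and the $\sum x^{2-\beta}$ case analysis are routine, except that the $\beta>3$ cellular estimate requires treating a convergent tail, for which I would bound $\sum_{x>L}x^{2-\beta}$ and argue it contributes at the stated order $r^3(n)$.
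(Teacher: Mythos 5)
Your overall route is the same as the paper's (Appendix A): collapse the inner $q$-sum using asymptotics of the elementary symmetric polynomials and the law of large numbers, then discretize the spatial sum over the octahedral hop-shells of $4x^{2}+2$ cubes with $n r^{3}(n)$ nodes each and $d_{k}\equiv x r(n)$, and finish with the trichotomy of $\sum_{x}x^{2-\beta}$ at $\beta=3$; your second stage, including the telescoping $\sum_{x=L+1}^{r^{-1}(n)}=\sum_{x=1}^{r^{-1}(n)}-\sum_{x=1}^{L}$ for $\operatorname{Pr}_{1}^{c}$, matches the paper's calculation exactly.

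The one step that would fail as written is the inclusion-ratio identity you chose. You state $\sigma_{q-1}(\mathbf{d}_{\mathbf{n}}^{\overline{\mathbf{k}}})/\sigma_{q}(\mathbf{d}_{\mathbf{n}})\equiv q/\sigma_{1}(\mathbf{d}_{\mathbf{n}})$; multiplying by the $d_{k}^{-\alpha}$ already present in the summand leaves a residual factor $d_{k}^{-\alpha}/\sigma_{1}(\mathbf{d}_{\mathbf{n}})$, so the surviving per-node law would be $d_{k}^{-\alpha-\beta}$ and the shell sum would become $r^{3-\alpha}(n)\sum_{x}x^{2-\alpha-\beta}$ — that is precisely the \emph{normal-node} answer of Theorem 4 (Appendix B, via Lemma 4 of \cite{16}, valid for $q\le q_{0}$), not the claimed $r^{3}(n)\sum_{x}x^{2-\beta}$. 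The paper instead invokes, for the leader range, the identity
\begin{equation*}
\frac{d_{k}^{-\alpha}\,\sigma_{q-1}\bigl(\mathbf{d}_{\mathbf{n}}^{\overline{\mathbf{k}}}\bigr)}{\sigma_{q}\bigl(\mathbf{d}_{\mathbf{n}}\bigr)}=\frac{q}{n},
\end{equation*}
i.e.\ the membership probability of any fixed node is asymptotically uniform, $q/n$, which is what actually annihilates the $\alpha$-dependence. You correctly anticipated that the $\alpha$ must wash out and flagged this as the crux, but the tool you wrote down does not accomplish it; you need the $q/n$ form (from \cite{Intro17}, \cite{Intro19}). The remaining normalization you left open is supplied in the paper by the LLN computation $\sigma_{1}(\mathbf{d}_{\mathbf{q}})=qE[\mathbf{d}_{\mathbf{q}}]$ with $E[\mathbf{d}_{\mathbf{q}}]\equiv\sum_{x}\Pr(X=x)(xr(n))^{-\beta}\equiv r^{-\beta}(n)$, which cancels one factor of $q$ and yields the prefactor $r^{\beta}(n)/n$; the leftover degree sum $\sum_{q>q_{0}}q^{-\gamma}/\sigma_{1}(\mathbf{q})$ is then $\Theta(1)$ in both regimes $\gamma>1$ and $0\le\gamma\le1$, which is your $g(q_{0})$.
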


\begin{proof}
Please refer to Appendix A.
\end{proof}

\begin{Theo}
The orders of $\Pr _2^a$ and $\Pr _2^c$ are listed in {Table \ref{tab_Pr2a} and Table \ref{tab_Pr2c}}, respectively.
\end{Theo}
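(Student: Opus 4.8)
The plan is to exploit the fact that the summands of $\mathrm{Pr}_2^a$ and $\mathrm{Pr}_2^c$ in \eqref{eq_pr2a} and \eqref{eq_pr2c} are \emph{identical} to those of $\mathrm{Pr}_1^a$ and $\mathrm{Pr}_1^c$ in \eqref{eq_pr1a} and \eqref{eq_pr1c}; the sole difference is that the degree index $q$ now ranges over the normal-node regime $1 \le q \le q_0$ rather than the leader regime $q_0 < q \le n-1$. Consequently I would reuse the entire asymptotic machinery of Appendix A and reduce the whole problem to re-evaluating a single truncated sum over $q$.

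First I would factor the generic summand
\[
\frac{q^{-\gamma}\, d_k^{-\alpha-\beta}\, \sigma_{q-1}\!\left(\mathbf{d}_n^{\overline{\mathbf{k}}}\right)}{\sigma_1(\mathbf{q})\,\sigma_1(\mathbf{d}_q)\,\sigma_q(\mathbf{d}_n)}
\]
into the product of the three probability factors already identified in \eqref{E1}, \eqref{E2} and \eqref{E_total1}, namely $\Pr(Q=q)$, $\Pr(o_k \in G)$ and $\Pr(D = o_k \mid o_k \in G)$. Using the mean-field estimate $\sigma_{q-1}(\mathbf{d}_n^{\overline{\mathbf{k}}})/\sigma_q(\mathbf{d}_n) \equiv \Theta(q/\sum_i d_i^{-\alpha})$ for the ratio of elementary symmetric polynomials, together with the law-of-large-numbers estimate $\sigma_1(\mathbf{d}_q) \equiv \Theta(q\,\overline{d^{-\beta}})$ where $\overline{d^{-\beta}} = \sum_i d_i^{-\alpha-\beta}/\sum_i d_i^{-\alpha}$, the two factors of $q$ cancel. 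The summand then separates into a purely spatial factor (in $d_k$, i.e.\ in the hop index $x$ and the shell $l$) times the purely degree-dependent factor $q^{-\gamma}/\sigma_1(\mathbf{q})$.

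Because the spatial factor is exactly the one summed in Appendix A, the spatial sums over $x\in[1,L]$, over $l$, and over $o_k\in c_l$ reproduce the same three $\beta$-regimes appearing in \eqref{eq_pr1a} for the ad hoc case and in \eqref{eq_pr1c} for the cellular case. What remains is the degree sum $\sum_{q=1}^{q_0} q^{-\gamma}/\sigma_1(\mathbf{q})$ with $\sigma_1(\mathbf{q}) = \sum_{q=1}^{n-1} q^{-\gamma}$, which I would evaluate by comparison with its integral in the cases $\gamma>1$ (numerator $\Theta(1)$, normalization $\Theta(1)$), $\gamma=1$ (numerator $\Theta(\ln q_0)$, normalization $\Theta(\ln n)$), and $\gamma<1$ (numerator $\Theta(q_0^{1-\gamma})$, normalization $\Theta(n^{1-\gamma})$). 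Equivalently, since the preceding theorem shows that the leader degree sum over $q_0 < q \le n-1$ contributes only a $\Theta(1)$ factor to \eqref{eq_pr1a} and \eqref{eq_pr1c}, I can write $\mathrm{Pr}_2^{a} \equiv \mathrm{Pr}_1^{a}\cdot \frac{\sum_{q=1}^{q_0} q^{-\gamma}}{\sum_{q=q_0+1}^{n-1} q^{-\gamma}}$, and similarly for the cellular case; each of the three $\beta$-entries of \eqref{eq_pr1a} and \eqref{eq_pr1c} is thereby multiplied by one of the three $\gamma$-dependent degree ratios, and this product yields the entries tabulated in Table \ref{tab_Pr2a} and Table \ref{tab_Pr2c}.

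The hard part will be justifying the separation of the summand, in particular the cancellation of the two $q$-factors: this rests on showing that both the elementary-symmetric-polynomial ratio and the contact-group normalization $\sigma_1(\mathbf{d}_q)$ scale linearly in $q$ to $\Theta$-order, uniformly over the truncated range $q\le q_0$. Once that separation is granted, the evaluation of the truncated degree sum is a routine integral comparison, with only the boundary case $\gamma=1$ and the relative scaling of $q_0$ against $n$ requiring extra care.
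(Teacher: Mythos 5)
Your mean-field separation of the summand (membership probability $\approx q\,d_k^{-\alpha}/\sigma_1(\mathbf{d_n})$, $\sigma_1(\mathbf{d_q})\approx q\,\overline{d^{-\beta}}$, cancellation of the two $q$'s) is essentially the paper's own route to \eqref{eq_420}, and the truncated degree sum is handled similarly there. But the step where you conclude is wrong: the spatial factor that survives your own separation is $d_k^{-\alpha-\beta}\big/\big(\sigma_1(\mathbf{d_n})\,\overline{d^{-\beta}}\big)$, \emph{not} the $d_k^{-\beta}$ kernel of Appendix A, so the shell sums do \emph{not} ``reproduce the same three $\beta$-regimes'' of \eqref{eq_pr1a}--\eqref{eq_pr1c}, and the shortcut $\operatorname{Pr}_2^{a}\equiv\operatorname{Pr}_1^{a}\cdot\sum_{q\le q_0}q^{-\gamma}\big/\sum_{q>q_0}q^{-\gamma}$ is false. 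The reason the two regimes differ is that the asymptotics of the elementary-symmetric-polynomial ratio are $q$-regime dependent: for leader nodes the paper uses $d_k^{-\alpha}\sigma_{q-1}(\mathbf{d_n^{\overline{k}}})/\sigma_q(\mathbf{d_n})\equiv q/n$, which washes out the distance weighting of group membership and leaves a pure $d_k^{-\beta}$ kernel (hence only $\beta$-cases in Theorem 3); for normal nodes the membership probability retains the $d_k^{-\alpha}$ weighting, so the kernel is $d_k^{-\alpha-\beta}$ normalized by $\sigma_1(\mathbf{d_n})$, the shell sums split at $\alpha+\beta=3$, and $\sigma_1(\mathbf{d_n})$ contributes its own three cases in $\alpha$ via \eqref{eq_44}. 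That two-parameter structure is exactly what populates Tables \ref{tab_Pr2a} and \ref{tab_Pr2c}; entries such as $\Theta\!\left(r^{3-\alpha}(n)L^{3-\alpha-\beta}\right)$ or $\Theta\!\left(\ln^{-1}\!\left(r^{-1}(n)\right)\right)$ cannot arise from $\Theta\!\left(r^{3}(n)L^{3-\beta}\right)$ multiplied by any $\gamma$- and $q_0$-dependent ratio. A one-line sanity check: for $\gamma>1$, $0\le\alpha<3$, $\alpha+\beta<3$, the table gives $\operatorname{Pr}_2^{a}\equiv\Theta\!\left(r^{3-\alpha}(n)L^{3-\alpha-\beta}\right)$ while $\operatorname{Pr}_1^{a}\equiv\Theta\!\left(r^{3}(n)L^{3-\beta}\right)$, so their ratio is $r^{-\alpha}(n)L^{-\alpha}$ --- manifestly not a degree ratio.

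To repair the argument you should follow your separation through to the end instead of collapsing back onto $\operatorname{Pr}_1^{a}$: bound $d_k^{-\alpha}\sigma_{q-1}(\mathbf{d_n^{\overline{k}}})\le d_k^{-\alpha}\sigma_{q-1}(\mathbf{d_n})$, apply the small-$q$ estimate $\sigma_{q-1}(\mathbf{d_n})/\sigma_q(\mathbf{d_n})\equiv\Theta(q)/\sigma_1(\mathbf{d_n})$, evaluate the shell sum $\sum_{x}\left(x^{2-\alpha-\beta}+x^{-\alpha-\beta}\right)$ by integral comparison (three cases at $\alpha+\beta=3$), and combine with the three-case asymptotics of $\sigma_1(\mathbf{d_n})$ in $\alpha$ and of $\sigma_1(\mathbf{q})$ in $\gamma$. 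Your treatment of the degree sum and of the $\gamma=1$ boundary is then the easy part, though note that your $\gamma<1$ ratio introduces a spurious $q_0^{1-\gamma}$ factor that the paper's bookkeeping (which keeps the convergent sum $\sum_{q\le q_0}q^{-\gamma-1}$ and extracts $n^{\gamma-1}$ from $\sigma_1(\mathbf{q})$ alone) avoids.
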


\begin{proof}
Please refer to Appendix B.
\end{proof}

\begin{table*}[!ht]
	\caption{The probability of normal ad hoc flow $\operatorname{Pr}_{2}^{a}$}
	\renewcommand{\arraystretch}{1.3} 
	\label{tab_Pr2a}
	\centering{
	\scalebox{0.75}{
		\begin{tabular}{|c|c|c|c|}

		\hline
		\diagbox{$\alpha + \beta$}{$\alpha$} & $0 \leq \alpha<3$ & $\alpha=3$ & $\alpha>3$\\
		\hline
		\multicolumn{4}{|c|}{$\gamma>1$}\\
		\hline
		$0 \leq \alpha+\beta<3$ & $\Theta\left(r^{3-\alpha}(n) L^{3-\alpha-\beta}\right)$ & -- & --\\
		\hline
		$\alpha+\beta=3$ & $\Theta\left(r^{3-\alpha}(n) \ln L\right)$ & $\Theta\left(\log _{r^{-1}(n)} L\right)$ & --\\
		\hline
		$\alpha+\beta>3$&$\Theta\left(r^{3-\alpha}(n)\right)$&$\Theta\left(\ln ^{-1}\left(r^{-1}(n)\right)\right)$&$\Theta(1)$\\
		\hline
		\multicolumn{4}{|c|}{$0 \leq \gamma \leq 1$}\\
		\hline
		$0 \leq \alpha+\beta<3$ & $\Theta\left(n^{\gamma-1} r^{3-\alpha}(n) L^{3-\alpha-\beta}\right)$ & -- & --\\
		\hline
		$\alpha+\beta=3$&$\Theta\left(n^{\gamma-1} r^{3-\alpha}(n) \ln L\right)$&$\Theta\left(n^{\gamma-1} \log _{r^{-1}(n)} L\right)$& --\\
		\hline
		$\alpha+\beta>3$ & $\Theta\left(n^{\gamma-1} r^{3-\alpha}(n)\right)$ & $\Theta\left(n^{\gamma-1} \ln ^{-1}\left(r^{-1}(n)\right)\right)$ & $\Theta\left(n^{\gamma-1}\right)$\\
		\hline
	\end{tabular}
	}
}
\end{table*}

\begin{table*}[!ht]
	\caption{The probability of normal cellular flow $\operatorname{Pr}_{2}^{c}$}
	\renewcommand{\arraystretch}{1.3} 
	\label{tab_Pr2c}
	\centering{
		\scalebox{0.75}{
		\begin{tabular}{|c|c|c|c|}
			
			\hline
			\diagbox{$\alpha + \beta$}{$\alpha$} & $0 \leq \alpha<3$ & $\alpha=3$ & $\alpha>3$\\
			\hline
			\multicolumn{4}{|c|}{$\gamma>1$}\\
			\hline
			$0 \leq \alpha+\beta<3$ & $\Theta\left(r^{\beta}(n)-r^{3-\alpha}(n) L^{3-\alpha-\beta}\right)$ & -- & --\\
			\hline
			$a+\beta=3$ & $\Theta\left(r^{3-\alpha}(n) \ln \left(r^{-1}(n) L^{-1}\right)\right)$ & $\Theta\left(1-\log _{r^{-1}(n)} L\right)$ & --\\
			\hline
			$\alpha+\beta>3$&$\Theta\left(r^{\beta}(n)-r^{3-\alpha}(n) L^{3-\alpha-\beta}\right)$&$\Theta\left(\ln ^{-1}\left(r^{-1}(n)\right)\left(r^{\beta}(n)-L^{\beta}\right)\right)$&$\Theta\left(r^{\alpha+\beta-3}(n)-L^{3-\alpha-\beta}\right)$\\
			\hline
			\multicolumn{4}{|c|}{$0 \leq \gamma \leq 1$}\\
			\hline
			$0 \leq \alpha+\beta<3$ & $\Theta\left(n^{\gamma-1}\left(r^{\beta}(n)-r^{3-\alpha}(n) L^{3-\alpha-\beta}\right)\right)$ & -- & --\\
			\hline
			$\alpha+\beta=3$&$\Theta\left(n^{\gamma-1} r^{3-\alpha}(n) \ln \left(r^{-1}(n) L^{-1}\right)\right)$&$\Theta\left(n^{\gamma-1}\left(1-\log _{r^{-1}(n)} L\right)\right)$& --\\
			\hline
			$\alpha+\beta>3$ & $\Theta\left(n^{\gamma-1}\left(r^{\beta}(n)-r^{3-\alpha}(n) L^{3-\alpha-\beta}\right)\right)$ & $\Theta \left( {{n^{\gamma  - 1}}{{\ln }^{- 1}}\left( {{r^{- 1}}\left( n \right)} \right)\left( {{r^\beta }\left( n \right) - {L^{- \beta }}} \right)} \right)$ & $\Theta \left( {{n^{\gamma  - 1}}\left( {{r^{\alpha  + \beta  - 3}}\left( n \right) - {L^{3 - \alpha  - \beta }}} \right)} \right)$\\
			\hline
		\end{tabular}
			}
		} 
\end{table*}

Using the results of $\Pr _1^a$ in (\ref{eq_pr1a}), $\Pr _1^c$ in (\ref{eq_pr1c}), {$\Pr _2^a$ in Table \ref{tab_Pr2a}, and $\Pr _2^c$ in Table \ref{tab_Pr2c}},
the number of ad hoc flows $N_a$ and the number of cellular flows $N_c$ can be derived.
When $\gamma > 1$, the results of $N_a$ and $N_c$ are shown in (\ref{eq_46}) and (\ref{eq_47}).

\newcounter{mytempeqncnt}
\begin{figure*}[!ht]
	\normalsize
	\setcounter{mytempeqncnt}{\value{equation}}
	\setcounter{equation}{26}
	\begin{equation}\label{eq_46} \small
		{N_a} = \left\{{\begin{matrix}
				{\Theta \left( {n{r^3}\left( n \right){L^{3 - \beta }} + n{r^{3 - \alpha }}\left( n \right){L^{3 - \alpha  - \beta }}} \right)} \hfill & {0 \le \alpha  < 3,0 \le \alpha  + \beta  < 3} \hfill  \cr 
				{\Theta \left( {n{r^3}\left( n \right){L^{3 - \beta }} + n{r^{3 - \alpha }}\left( n \right)\ln L} \right)} \hfill & {0 \le \alpha  < 3,\alpha  + \beta  = 3} \hfill  \cr 
				{\Theta \left( {n{r^3}\left( n \right){L^{3 - \beta }} + n{r^{3 - \alpha }}\left( n \right)} \right)} \hfill & {0 \le \alpha  < 3,0 \le \beta  < 3,\alpha  + \beta  > 3} \hfill  \cr 
				{\Theta \left( {n{r^3}\left( n \right)\ln L + n{r^{3 - \alpha }}\left( n \right)} \right)} \hfill & {0 < \alpha  < 3,\beta  = 3} \hfill  \cr 
				{\Theta \left( {n{r^{3 - \alpha }}\left( n \right)} \right)} \hfill & {0 \le \alpha  < 3,\beta  > 3} \hfill  \cr 
				{\Theta \left( {n{r^3}\left( n \right){L^3} + n{{\log }_{{r^{- 1}}\left( n \right)}}L} \right)} \hfill & {\alpha  = 3,\beta  = 0} \hfill  \cr 
				{\Theta \left( {n{r^3}\left( n \right){L^{3 - \beta }} + n{{\ln }^{- 1}}\left( {{r^{- 1}}\left( n \right)} \right)} \right)} \hfill & {\alpha  = 3,0 < \beta  < 3} \hfill  \cr 
				{\Theta \left( {n{r^3}\left( n \right)\ln L + n{{\ln }^{- 1}}\left( {{r^{- 1}}\left( n \right)} \right)} \right)} \hfill & {\alpha  = 3,\beta  = 3} \hfill  \cr 
				{\Theta \left( {n{{\ln }^{- 1}}\left( {{r^{- 1}}\left( n \right)} \right)} \right)} \hfill & {\alpha  = 3,\beta  > 3} \hfill  \cr 
				{\Theta \left( n \right)} \hfill & {\alpha  > 3} \hfill  \cr 
		\end{matrix} } \right.
	\end{equation}
\begin{equation}\label{eq_47}\small
	N_{c}=\left\{\begin{array}{ll}
		\Theta\left(2 n r^{\beta}(n)-n r^{3}(n) L^{3-\beta}-n r^{3-\alpha}(n) L^{3-\alpha-\beta}\right) &
		0 \leq \alpha<3,0 \leq \beta<3, \alpha+\beta \neq 3 \\
		\Theta\left(n r^{\beta}(n)-n r^{3}(n) L^{3-\beta}+n r^{3-\alpha}(n) \ln \left(r^{-1}(n) L^{-1}\right)\right) & 0 \leq \alpha<3,0 \leq \beta \leq 3, \alpha+\beta=3 \\
		\Theta\left(n r^{3}(n) \ln \left(r^{-1}(n) L^{-1}\right)+n r^{\beta}(n)-n r^{3-\alpha}(n) L^{3-\alpha-\beta}\right) & 0<\alpha<3, \beta=3 \\
		\Theta\left(n r^{3}(n)+n r^{\beta}(n)-n r^{3-\alpha}(n) L^{3-\alpha-\beta}\right) & 0 \leq \alpha<3, \beta>3 \\
		\Theta\left(2 n-n r^{3}(n) L^{3}-n \log _{r^{-1}(n)} L\right) & \alpha=3, \beta=0 \\
		\Theta\left(n r^{\beta}(n)-n r^{3}(n) L^{3-\beta}+n \ln \left(r^{-1}(n) L^{-1}\right)\left(r^{\beta}(n)-L^{-\beta}\right)\right) & \alpha=3,0<\beta<3 \\
		\Theta\left(n \ln \left(r^{-1}(n) L^{-1}\right)\left(r^{3}(n)+r^{\beta}(n)-L^{-\beta}\right)\right) & \alpha=3, \beta=3 \\
		\Theta\left(n r^{3}(n)+n \ln \left(r^{-1}(n) L^{-1}\right)\left(r^{\beta}(n)-L^{-\beta}\right)\right) & \alpha=3, \beta>3 \\
		\Theta\left(n r^{\beta}(n)+n r^{\alpha+\beta-3}(n)-n r^{3}(n) L^{3-\beta}-n L^{3-\alpha-\beta}\right) & \alpha>3,0 \leq \beta<3 \\
		\Theta\left(n r^{3}(n) \ln \left(r^{-1}(n) L^{-1}\right)+n r^{\alpha}(n)-n r^{-\alpha}(n)\right) & \alpha>3, \beta=3 \\
		\Theta\left(n r^{3}(n)+n r^{\alpha+\beta-3}(n)-n L^{3-\alpha-\beta}\right) & \alpha>3, \beta>3
	\end{array}\right.
\end{equation}
\end{figure*}

Therefore, whether the flows are dominated by ad hoc flows or cellular flows is influenced by $L$ in the  routing strategy.
It is observed that when $\alpha > 3$, $N_a$ increases linearly with $n$.
Because when $\alpha$ increases, contact group members of source node gather around the source node, so that the number of hops needed for communication tends to be smaller than $L$, thus the number of ad hoc flows increases with $n$.
Besides, on account of $L = {\rm O}\left( {{r^{- 1}}\left( n \right)} \right)$ and $L = \Omega \left( 1 \right)$, we have ${N_a} + {N_c} \equiv \Theta \left( n \right)$.

\begin{Theo}
	The average number of hops of ad hoc flows passing through each cube, namely $E[F]$, is $\left( {n{{\Pr }^a}} \right) \times \left( {{1 \over {{{\Pr }^a}}}{E^\prime }\left[ X \right]} \right) \times {r^3}\left( n \right)$, where ${\Pr }^a$ is the probability for nodes transmitting with ad hoc mode.
\end{Theo}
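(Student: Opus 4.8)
The plan is to read the claimed identity as a refinement of the relation $E[F] = N_a E[X] V$ obtained in the proof of Theorem 2, in which the aggregate quantities $N_a$ and $E[X]$ are replaced by explicit expressions built from the ad hoc probability $\Pr^a$ of \eqref{eq_19} and a truncated first moment of the hop count. Throughout I write
\begin{equation}\small
E'[X] = \sum_{x=1}^{L} x\,\Pr(X=x)
\end{equation}
for the partial (ad hoc) expectation of $X$, so that the three factors appearing in the statement will correspond, in order, to the number of ad hoc flows, the mean number of hops carried by a single ad hoc flow, and the volume of a small cube.

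First I would count the ad hoc flows. Each of the $n$ UAVs is the source of one flow, and under the $L$-routing scheme a flow is an ad hoc flow precisely when its destination lies within $L$ hops of the source, an event of probability $\Pr^a=\sum_{x=1}^{L}\Pr(X=x)$. By linearity of expectation the expected number of ad hoc flows is therefore $N_a = n\Pr^a$, the first factor. Next I would identify the mean hop length of an ad hoc flow with the conditional expectation
\begin{equation}\small
E[X\mid X\le L] = \frac{\sum_{x=1}^{L} x\,\Pr(X=x)}{\sum_{x=1}^{L}\Pr(X=x)} = \frac{1}{\Pr^a}E'[X],
\end{equation}
which is the second factor and which plays the role of $E[X]$ in Theorem 2 once traffic is restricted to the ad hoc mode. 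The third factor is immediate: each small cube has volume $V=(c_2 r(n))^3 \equiv r^3(n)$.

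The hard part will be the spatial-averaging step that turns these three factors into $E[F]$. The total number of cube-traversals generated by ad hoc traffic is $N_a\cdot E[X\mid X\le L]$, since along its straight-line route each ad hoc flow occupies exactly one cube per hop. Because the nodes are uniformly distributed and routing follows the straight line joining source and destination, every small cube is equally likely to be crossed by any given hop, so these traversals are spread evenly over the $\Theta(r^{-3}(n))$ cubes tiling the unit cube. Dividing the total by the number of cubes, equivalently multiplying by the per-cube volume $V\equiv r^3(n)$, then gives
\begin{equation}\small
E[F] = N_a\cdot E[X\mid X\le L]\cdot r^3(n) = (n\Pr^a)\Bigl(\tfrac{1}{\Pr^a}E'[X]\Bigr)r^3(n),
\end{equation}
which is the asserted expression. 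The delicate point is exactly this uniform-loading claim: it rests on the symmetry of the uniform placement together with the straight-line routing rule, and it is the same averaging argument that justifies $E[F]=N_a E[X]V$ in Theorem 2, here invoked with the ad hoc-conditioned hop distribution.
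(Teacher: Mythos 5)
Your proposal is correct and follows essentially the same route as the paper: it starts from the relation $E[F]=N_aE[X]V$ of Theorem 2, identifies $N_a=n\Pr^a$ by linearity, replaces $E[X]$ with the ad hoc--conditioned mean $E'[X]/\Pr^a$, and takes $V\equiv r^3(n)$, with the uniform-spreading argument justifying the product. The only difference is that you spell out the uniform-loading and $N_a=n\Pr^a$ steps more explicitly than the paper does, which is a minor presentational improvement rather than a different proof.
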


\begin{proof}
	As mentioned in Section \ref{NC_of_adhoc}, $E\left[ F \right] = {N_a}E\left[ X \right]V$, where $V \equiv {r^3}\left( n \right)$.
	
	The total number of hops of all ad hoc flows is denoted by $X_{total}$. The number of hops of flow $i$ is denoted by $X_i$. Then, we have
	\begin{equation}\small
		E\left[ {{X_{total}}} \right] = E\left[ {\sum\limits_{i = 1}^{{N_a}} {{X_i}} } \right] = \sum\limits_{i = 1}^{{N_a}} {E\left[ {{X_i}} \right]}.
	\end{equation}
	
	Suppose that ${X_i}\left( {i \in \left\{{1,2,...{N_a}} \right\}} \right)$ are independent and identically distributed (i.i.d.), and $X$ has the same distribution with $X_i$.
	Then, we have $E\left[ {{X_{total}}} \right] = {N_a}E\left[ X \right]$.
	On the condition of unbiased estimation, $E\left[ {{X_{total}}} \right] = {N_a}E\left[ X \right]$. Therefore, for each cube, $E\left[ F \right] = {N_a}E\left[ X \right]V$, where $E\left[ X \right]$ is as \eqref{eq_new1}.

	\begin{equation}\label{eq_new1}\small
		\begin{aligned}
			E\left[ X \right] &= \sum\limits_{x = 1}^L {x\Pr \left( {X = x|\text{\emph{The flow is ad hoc flow}}} \right)}  \\&= \sum\limits_{x = 1}^L {x{{\Pr \left( {X = x} \right)} \over {{{\Pr }^a}}}}  \\&= {1 \over {{{\Pr }^a}}}\sum\limits_{x = 1}^L {x\Pr \left( {X = x} \right)}.
		\end{aligned}
	\end{equation}
	
	Suppose that $E^{\prime}[X]=\sum_{x=1}^{L} x \operatorname{Pr}(X=x)$, we have $E[X]=E^{\prime}[X] / \operatorname{Pr}^{a}$. We divide $E^{\prime}[X]$ into two parts according to the threshold $q_0$, i.e. $E^{\prime}[X]=E_{1}^{\prime}[X]+E_{2}^{\prime}[X]$, where 
	$E_{1}^{\prime}[X]$ represents the average number of hops of the flows starting from leader nodes, and $E_{2}^{\prime}[X]$ represents the average number of hops of the flows starting from normal nodes.
	
	Substituting (\ref{eq_18}) and (\ref{eq_19}) into $E_{1}^{\prime}[X]$ and $E_{2}^{\prime}[X]$, $E_{1}^{\prime}[X]$ is as follows.
	
	\begin{equation}\label{eq_50}\small
		E_{1}^{\prime}[X]=\left\{\begin{array}{ll}
			\Theta\left(r^{3}(n) L^{4-\beta}\right) & 0 \leq \beta<4 \\
			\Theta\left(r^{3}(n) \ln L\right) & \beta=4 \\
			\Theta\left(r^{3}(n)\right) & \beta>4
		\end{array}\right.
	\end{equation}

	When $\gamma > 1$, $E_{2}^{\prime}[X]$ is
	
	\begin{equation}\label{eq_51}\small
		\begin{small}
			E_{2}^{\prime}[X]=\left\{\begin{array}{ll}
			\Theta\left(r^{3-\alpha}(n) L^{4-\alpha-\beta}\right) & 0 \leq \alpha<3,0 \leq \alpha+\beta<4 \\
			\Theta\left(r^{3-\alpha}(n) \ln L\right) & 0 \leq \alpha<3, \alpha+\beta=4 \\
			\Theta\left(r^{3-\alpha}(n)\right) & 0 \leq \alpha<3, \alpha+\beta>4 \\
			\Theta\left(\ln ^{-1}\left(r^{-1}(n)\right) L^{1-\beta}\right) & \alpha=3,0 \leq \alpha+\beta<4 \\
			\Theta\left(\log _{r^{-1}(n)} L\right) & \alpha=3, \alpha+\beta=4 \\
			\Theta\left(\ln ^{-1}\left(r^{-1}(n)\right)\right) & \alpha=3, \alpha+\beta>4 \\
			\Theta\left(L^{4-\alpha-\beta}\right) & \alpha>3,0 \leq \alpha+\beta<4 \\
			\Theta(\ln L) & \alpha>3, \alpha+\beta=4 \\
			\Theta(1) & \alpha>3, \alpha+\beta>4
		\end{array}\right.
		\end{small}
	\end{equation}

\begin{figure*}[!ht]
	\normalsize
	\setcounter{mytempeqncnt}{\value{equation}}
		\begin{equation}\label{eq_52}\small
		E'\left[ X \right] = \left\{{\begin{matrix}
				{\Theta \left( {{r^3}\left( n \right){L^{4 - \beta }} + {r^{3 - \alpha }}\left( n \right){L^{4 - \alpha  - \beta }}} \right)} \hfill & {0 \le \alpha  < 3,0 \le \beta  < 4,0 \le \alpha  + \beta  < 4} \hfill  \cr 
				{\Theta \left( {{r^3}\left( n \right){L^{4 - \beta }} + {r^{3 - \alpha }}\left( n \right)\ln L} \right)} \hfill & {0 \le \alpha  < 3,0 \le \beta  \le 4,\alpha  + \beta  = 4} \hfill  \cr 
				{\Theta \left( {{r^3}\left( n \right){L^{4 - \beta }} + {r^{3 - \alpha }}\left( n \right)} \right)} \hfill & {0 \le \alpha  < 3,0 \le \beta  < 4,\alpha  + \beta  > 4} \hfill  \cr 
				{\Theta \left( {{r^3}\left( n \right)\ln L + {r^{3 - \alpha }}\left( n \right)} \right)} \hfill & {0 \le \alpha  < 3,\beta  = 4} \hfill  \cr 
				{\Theta \left( {{r^{3 - \alpha }}\left( n \right)} \right)} \hfill & {0 \le \alpha  < 3,\beta  > 4} \hfill  \cr 
				{\Theta \left( {{r^3}\left( n \right){L^{4 - \beta }} + {{\ln }^{- 1}}\left( {{r^{- 1}}\left( n \right)} \right){L^{1 - \beta }}} \right)} \hfill & {\alpha  = 3,0 \le \beta  < 1} \hfill  \cr 
				{\Theta \left( {{r^3}\left( n \right){L^3} + {{\log }_{{r^{- 1}}\left( n \right)}}L} \right)} \hfill & {\alpha  = 3,\beta  = 1} \hfill  \cr 
				{\Theta \left( {{r^3}\left( n \right){L^{4 - \beta }} + {{\ln }^{- 1}}\left( {{r^{- 1}}\left( n \right)} \right)} \right)} \hfill & {\alpha  = 3,1 < \beta  < 4} \hfill  \cr 
				{\Theta \left( {{r^3}\left( n \right)\ln L + {{\log }_{{r^{- 1}}\left( n \right)}}L} \right)} \hfill & {\alpha  = 3,\beta  = 4} \hfill  \cr 
				{\Theta \left( {{{\ln }^{- 1}}\left( {{r^{- 1}}\left( n \right)} \right)} \right)} \hfill & {\alpha  = 3,\beta  > 4} \hfill  \cr 
				{\Theta \left( {{r^3}\left( n \right){L^{4 - \beta }} + {L^{4 - \alpha  - \beta }}} \right)} \hfill & {\alpha  > 3,0 \le \beta  < 4,0 \le \alpha  + \beta  < 4} \hfill  \cr 
				{\Theta \left( {{r^3}\left( n \right){L^{4 - \beta }} + \ln L} \right)} \hfill & {\alpha  > 3,0 \le \beta  < 4,\alpha  + \beta  = 4} \hfill  \cr 
				{\Theta \left( {{r^3}\left( n \right){L^{4 - \beta }}} \right)} \hfill & {\alpha  > 3,0 \le \beta  < 4,\alpha  + \beta  > 4} \hfill  \cr 
				{\Theta \left( {{r^3}\left( n \right)\ln L} \right)} \hfill & {\alpha  > 3,\beta  = 4} \hfill  \cr 
				{\Theta \left( {{r^3}\left( n \right)} \right)} \hfill & {\alpha  > 3,\beta  > 4} \hfill  \cr
		\end{matrix}} \right.
	\end{equation}
\end{figure*}


	When $0 \le \gamma \le 1$, $E_{2}^{\prime}[X]$ is $n^{\gamma-1}$ times greater than that in (\ref{eq_51}).	
	(\ref{eq_50}) and (\ref{eq_51}) show that if the source nodes are leader nodes, the average number of hops of ad hoc flows is only related to $\beta$. If the source nodes are normal nodes, $\alpha$ and $\beta$ will jointly influence the average number of hops of the ad hoc flows and cellular flows of the source nodes. 
	This is due to the fact that the leader nodes connect more members, which counteracts the influence of the concentration factor $\alpha$ when $n$ tends to infinity. 

	Specifically, the distance between source node and destination node will decrease when $\alpha$ or $\beta$ increases.
	For leader nodes, when $0\le \beta \le 4$, the average number of hops of ad hoc flows increases when $L$ increases, and decreases when $\beta$ increases. When $\beta$ is large, the destination node tends to be closer to the source node. Therefore, when $\beta>4$, the trend of $E_{1}^{\prime}[X]$ has nothing to do with $L$.
	For normal nodes, $E_{2}^{\prime}[X]$ are influenced by $\alpha$ and $\beta$ simultaneously. When $0 \le \alpha <3$ and $0\le \alpha+\beta <4$, the average number of hops of the ad hoc flows increases when $L$ increases, and decreases when $\alpha$ or $\beta$ increases. When $\alpha >3$ or $\alpha + \beta>4$, $L$ has no relationship with the number of average number of hops.
	
	Since $\gamma >2$ in the actual network \cite{wiki1}\cite{wiki2}, we can derive the result of $E^{\prime}[X]$ as (\ref{eq_52}).

Finally, we have the following result.
\begin{equation}\small
	\begin{aligned}
		E\left[ F \right] &= {N_a}E\left[ X \right]V = \left( {n{{\Pr }^a}} \right) \times \left( {{1 \over {{{\Pr }^a}}}{E^\prime }\left[ X \right]} \right) \times {r^3}\left( n \right) \\&= \log \left( n \right){E^\prime }\left[ X \right].
	\end{aligned}
\end{equation}
\end{proof}

\subsection{Network throughput}
According to (\ref{eq_16}), the per-node throughput of ad hoc mode $\lambda_{a}^{n}$ is 
\begin{equation}\small
	\lambda_{a}^{n} \equiv \Theta\left(\frac{W_{a}}{E[F]}\right)=\Theta\left(\frac{W_{a}}{\log (n) E^{\prime}[X]}\right).
\end{equation}

Note that if $E[F]=O(1)$, $\lambda_{a}^{n}$ equals to $\Theta\left(W_{a}\right)$, because the average throughput of the ad hoc flows is smaller than $W_a$.

The relationship between $E^{\prime}[X]$ and $L$ under different ranges of $\alpha$ and $\beta$ is analyzed as follows.
Note that since $\gamma > 2$ in the actual network \cite{wiki1}\cite{wiki2}, only the results of $\gamma >1$ is considered in terms of the number of the ad hoc flows $N_a$.

{ In order to better understand the piecewise of throughput as follows, recall that the probability that a node is selected as a member of contact group is proportional to ${d_i}^{- \alpha }$, the probability that a contact group member will be communicated in a certain time slot is proportional to ${d_i}^{- \beta }$, and the number of members in the contact group is proportional to ${q^{- \gamma }}$. Therefore, the probability that a node will be communicated in a certain time slot is proportional to ${d_i}^{- (\alpha  + \beta )}$, i.e., $(\alpha  + \beta )$ reveals the communication activity level of the network.}
\subsubsection{$0 \le \alpha <3$, $0\le\beta<3$ and $0\le\alpha+\beta<3$}\label{sec_a}
When $L=\Omega\left(r^{-1}(n)\right)$, $E^{\prime}[X]$ is dominated by $E_{1}^{\prime}[X]$. When $L=\mathrm{O}\left(r^{-1}(n)\right)$, $E^{\prime}[X]$ is dominated by $E_{2}^{\prime}[X]$.
Considering that in the unit cube of the communication model in Section \ref{sec_interference_model}, there is always $L=\mathrm{O}\left(r^{-1}(n)\right)$.
Therefore, $E^{\prime}[X]$ is always dominated by $E_{2}^{\prime}[X]$ in this case.
So the per-node throughput of ad hoc mode is as follows.
\begin{equation}\label{eq_55}\small
	\begin{small}
		\lambda_{a}^{n} \equiv\left\{\begin{array}{ll}
			\Theta\left(\frac{W_{a}}{\log (n) r^{3-\alpha}(n) L^{4-\alpha-\beta}}\right)  \\\qquad \qquad \qquad  L=\Omega\left(\left(\log ^{-1}(n) r^{\alpha-3}(n)\right)^{\frac{1}{4-\alpha-\beta}}\right) \\
			\Theta\left(W_{a}\right)  \\\qquad \qquad \qquad  L=\mathrm{O}\left(\left(\log ^{-1}(n) r^{\alpha-3}(n)\right)^{\frac{1}{4-\alpha-\beta}}\right)
		\end{array}\right.
	\end{small}
\end{equation}

$\bullet$ When $L=\Omega\left(\left(\log ^{-1}(n) r^{\alpha-3}(n)\right)^{1 / 4-\alpha-\beta}\right)$, the network throughput of ad hoc mode is
\begin{equation}\label{eq_56}\small
	\begin{aligned}
		\lambda_{a} \equiv N_{a} \lambda_{a}^{n} &=\Theta\left(n r^{3}(n) L^{3-\beta}+n r^{3-\alpha}(n) L^{3-\alpha-\beta}\right) \\ &\quad\cdot \Theta\left(\frac{W_{a}}{\log (n) r^{3-\alpha}(n) L^{4-\alpha-\beta}}\right) \\
		&=\Theta\left(\frac{n^{1-\frac{\alpha}{3}} L^{\alpha-1} W_{a}}{\log ^{1-\frac{\alpha}{3}}(n)}+\frac{n W_{a}}{\log (n) L}\right).
	\end{aligned}
\end{equation}

$\bullet$ When $L=\mathrm{O}\left(\left(\log ^{-1}(n) r^{\alpha-3}(n)\right)^{1 / 4-\alpha-\beta}\right)$, the network throughput of ad hoc mode is

\begin{equation}\label{eq_57}\small
	\begin{aligned}
		\lambda_{a} &\equiv N_{a} \lambda_{a}^{n}\\&=\Theta\left(\log (n) L^{3-\beta} W_{a}+n^{\frac{\alpha}{3}} \log ^{1-\frac{\alpha}{3}}(n) L^{3-\alpha-\beta} W_{a}\right).
	\end{aligned}
\end{equation}

According to (\ref{eq_56}) and (\ref{eq_57}), when $L=\Theta\left(\left(\log ^{-1}(n) r^{\alpha-3}(n)\right)^{1 / 4-\alpha-\beta}\right)$, the network throughput of ad hoc mode is dominant, which is
\begin{equation}\small\label{eq39}
	\lambda_{a }=\Theta\left(\log (n) L^{3-\beta} W_{a}+n^{\frac{\alpha}{3}} \log ^{1-\frac{\alpha}{3}}(n) L^{3-\alpha-\beta} W_{a}\right).
\end{equation}

\subsubsection{$0 \le \alpha <3$, $0 \le \beta <3$ and $3 < \alpha + \beta <4$}\label{sec_b}
In this case, $E^{\prime}[X]$ is dominated by $E_{2}^{\prime}[X]$, where $L=\mathrm{O}\left(r^{-1}(n)\right)$. Therefore, the per-node throughput of ad hoc mode is (\ref{eq_55}).

$\bullet$ When $L=\Omega\left(\left(\log ^{-1}(n) r^{\alpha-3}(n)\right)^{1 / 4-\alpha-\beta}\right)$, the network throughput of ad hoc mode is 
\begin{equation}\small
	\begin{aligned}
		\lambda_{a} \equiv N_{a} \lambda_{a}^{n} &=\Theta\left(n r^{3}(n) L^{3-\beta}+n r^{3-\alpha}(n)\right)\\&\quad \cdot \Theta\left(\frac{W_{a}}{\log (n) r^{3-\alpha}(n) L^{4-\alpha-\beta}}\right) \\
		&=\Theta\left(\frac{n^{1-\frac{\alpha}{3}} L^{\alpha-1} W_{a}}{\log ^{1-\frac{\alpha}{3}}(n)}+\frac{n W_{a}}{\log (n) L^{4-\alpha-\beta}}\right).
	\end{aligned}
\end{equation}

$\bullet$ When $L=\mathrm{O}\left(\left(\log ^{-1}(n) r^{\alpha-3}(n)\right)^{1 / 4-\alpha-\beta}\right)$, the network
throughput of ad hoc mode is
\begin{equation}\small
	\lambda_{a} \equiv N_{a} \lambda_{a}^{n}=\Theta\left(\log (n) L^{3-\beta} W_{a}+n^{\frac{\alpha}{3}} \log ^{1-\frac{\alpha}{3}}(n) W_{a}\right).
\end{equation}

When $L=\Theta\left(\left(\log ^{-1}(n) r^{\alpha-3}(n)\right)^{1 / 4-\alpha-\beta}\right)$, the network throughput of ad hoc mode is dominant, which is
\begin{equation}\small\label{eq42}
	\lambda_{a }=\Theta\left(\log (n) L^{3-\beta} W_{a}+n^{\frac{\alpha}{3}} \log ^{1-\frac{\alpha}{3}}(n) W_{a}\right).
\end{equation}

\subsubsection{$0 \le \alpha <3$, $0 \le \beta <3$ and $\alpha + \beta >4$}\label{sec_c}
In this case, $E^{\prime}[X]$ is dominated by $E_{1}^{\prime}[X]$ when $L=\Omega\left(r^{\alpha / \beta-4}(n)\right)$, and $E^{\prime}[X]$ is dominated by $E_{2}^{\prime}[X]$ when $L=\mathrm{O}\left(r^{\alpha / \beta-4}(n)\right)$.

$\bullet$ When normal nodes are dominant, $E[F]=\log (n) E[X]=O(1)$. Hence, $\lambda_{a}^{n} \equiv \Theta\left(W_{a}\right)$.
Note that when $\alpha \equiv \Omega\left(3+\log _{r(n)} \log (n)\right)$, we have $r^{\alpha / (\beta-4)}(n) \equiv \Omega\left(\left(\log ^{-1}(n) r^{-3}(n)\right)^{1 / 4-\beta}\right)$.
However, when $n \rightarrow \infty$, we have $\alpha \equiv \mathrm{O}\left(3+\log _{r(n)} \log (n)\right)$. 
In this case, $r^{\alpha / \beta-4}(n) \equiv \mathrm{O}\left(\left(\log ^{-1}(n) r^{-3}(n)\right)^{1 / 4-\beta}\right)$. Thus, the per-node throughput of ad hoc mode is 
\begin{equation}\small
	\begin{small}
			\lambda_{a}^{n} \equiv\left\{\begin{array}{cl}
			\Theta\left(\frac{W_{a}}{\log (n) r^{3}(n) L^{4-\beta}}\right) & L=\Omega\left(\left(\log ^{-1}(n) r^{-3}(n)\right)^{\frac{1}{4-\beta}}\right) \\
			\Theta\left(W_{a}\right) & L=\mathrm{O}\left(\left(\log ^{-1}(n) r^{-3}(n)\right)^{\frac{1}{4-\beta}}\right)
		\end{array}\right.
	\end{small}
\end{equation}

$\bullet$ When $L=\Omega\left(\left(\log ^{-1}(n) r^{-3}(n)\right)^{1 / 4-\beta}\right)$, the network throughput of ad hoc mode is 
\begin{equation}\label{eq_63}\small
	\begin{aligned}
		\lambda_{a} \equiv N_{a} \lambda_{a}^{n} &=\Theta\left(n r^{3}(n) L^{3-\beta}+n r^{3-\alpha}(n)\right) \\&\quad\cdot \Theta\left(\frac{W_{a}}{\log (n) r^{3}(n) L^{4-\beta}}\right) \\
		&=\Theta\left(\frac{n W_{a}}{\log (n) L}+\frac{n^{1+\frac{\alpha}{3}} W_{a}}{\log \sqrt{3}^{1+\frac{\alpha}{3}}(n) L^{4-\beta}}\right).
	\end{aligned}
\end{equation}

$\bullet$ When $L=\mathrm{O}\left(\left(\log ^{-1}(n) r^{\alpha-3}(n)\right)^{1 / 4-\alpha-\beta}\right)$, the network throughput of ad hoc mode is
\begin{equation}\label{eq_64}\small
	\lambda_{a} \equiv N_{a} \lambda_{a}^{n}=\Theta\left(\log (n) L^{3-\beta} W_{a}+n^{\frac{\alpha}{3}} \log ^{1-\frac{\alpha}{3}}(n) W_{a}\right).
\end{equation}

According to (\ref{eq_63}) and (\ref{eq_64}), when $L=\Theta\left(\left(\log ^{-1}(n) r^{\alpha-3}(n)\right)^{1 / 4-\alpha-\beta}\right)$,  the network throughput of ad hoc mode is dominant, which is
\begin{equation}\small\label{eq46}
	\lambda_{a }=\Theta\left(\log (n) L^{3-\beta} W_{a}+n^{\frac{\alpha}{3}} \log ^{1-\frac{\alpha}{3}}(n) W_{a}\right).
\end{equation}

\subsubsection{$0\le \alpha <3$, $\beta >3$ and $3<\alpha +\beta <4$}\label{sec_d}

In this case, $E^{\prime}[X]$ is equal to that of subsection \emph{1)} and \emph{2)}, in which $L=\mathrm{O}\left(r^{-1}(n)\right)$, and $E^{\prime}[X]$ is dominated by $E_{2}^{\prime}[X]$.

$\bullet$ When $L=\Omega\left(\left(\log ^{-1}(n) r^{\alpha-3}(n)\right)^{1 / 4-\alpha-\beta}\right)$, the network throughput of ad hoc mode is 
\begin{equation}\small
	\begin{aligned}
		\lambda_{a} \equiv N_{a} \lambda_{a}^{n} &=\Theta\left(n r^{3-\alpha}(n)\right) \cdot \Theta\left(\frac{W_{a}}{\log (n) r^{3-\alpha}(n) L^{4-\alpha-\beta}}\right) \\
		&=\Theta\left(\frac{n W_{a}}{\log (n) L^{4-\alpha-\beta}}\right).
	\end{aligned}
\end{equation}

$\bullet$ When $L=\Omega\left(\left(\log ^{-1}(n) r^{\alpha-3}(n)\right)^{1 / 4-\alpha-\beta}\right)$, the network throughput is
\begin{equation}\small
	\lambda_{a} \equiv N_{a} \lambda_{a}^{n}=\Theta\left(n^{\frac{\alpha}{3}} \log ^{1-\frac{\alpha}{3}}(n) W_{a}\right).
\end{equation}

Therefore, when $L=\Theta\left(\left(\log ^{-1}(n) r^{\alpha-3}(n)\right)^{1 / 4-\alpha-\beta}\right)$, the network throughput of ad hoc mode is dominant, which is 
\begin{equation}\small\label{eq49}
	\lambda_{a}=\Theta\left(n^{\frac{\alpha}{3}} \log ^{1-\frac{\alpha}{3}}(n) W_{a}\right),
\end{equation}
which reveals that when $\beta$ is large enough, the destination nodes will gather around the source nodes, which make the network throughput independent of $L$, and the number of ad hoc transmission depends on the value of $\alpha$ in a certain range.

\subsubsection{$\alpha >3$, $0 \le \beta <4$ and $3 < \alpha +\beta<4$}\label{sec_e}
In this case, $E^{\prime}[X]$ is dominated by $E_{1}^{\prime}[X]$ when $L=\Omega\left(r^{-3 / \alpha}(n)\right)$, and $E^{\prime}[X]$ is dominated by $E_{2}^{\prime}[X]$ when $L=\mathrm{O}\left(r^{-3 / \alpha}(n)\right)$. When $L=\mathrm{O}\left(r^{-3 / \alpha}(n)\right)$, there is always $r^{-3 / \alpha}(n)=\Omega\left(\left(\log ^{-1}(n) r^{-3}(n)\right)^{1 / 4-\beta}\right)$. Therefore, the per-node throughput of ad hoc mode is as (\ref{eq_69}).
\begin{figure*}[!ht]
	\normalsize
	\setcounter{mytempeqncnt}{\value{equation}}
	\begin{equation}\label{eq_69}\small
		\lambda_{a}^{n} \equiv\left\{\begin{array}{lc}
			\Theta\left(\frac{W_{a}}{\log (n) r^{3}(n) L^{4-\beta}}\right) & L=\Omega\left(\left(\log ^{-1}(n) r^{-3}(n)\right)^{\frac{1}{4-\beta}}\right) \\
			\Theta\left(\frac{W_{a}}{\log (n) L^{4-\alpha-\beta}}\right) & L=\mathrm{O}\left(\left(\log ^{-1}(n) r^{-3}(n)\right)^{\frac{1}{4-\beta}}\right) \text {and } L=\Omega\left(\log ^{-\frac{1}{4-\alpha-\beta}}(n)\right) \\
			\Theta\left(W_{a}\right) & L=\mathrm{O}\left(\log ^{-\frac{1}{4-\alpha-\beta}}(n)\right)
		\end{array}\right.
	\end{equation}
\end{figure*}

$\bullet$ When $L=\Omega\left(\left(\log ^{-1}(n) r^{-3}(n)\right)^{1 / (4-\beta)}\right)$, the network throughput of ad hoc mode is
\begin{equation}\small
	\begin{aligned}
		\lambda_{a} \equiv N_{a} \lambda_{a}^{n}&=\Theta(n) \cdot \Theta\left(\frac{W_{a}}{\log (n) r^{3}(n) L^{4-\beta}}\right)\\&=\Theta\left(\frac{n^{2} W_{a}}{\log ^{2}(n) L^{4-\beta}}\right).
	\end{aligned}
\end{equation}

$\bullet$ When $L=\mathrm{O}\left(\left(\log ^{-1}(n) r^{-3}(n)\right)^{1 / (4-\beta)}\right)$, and $L=\Omega\left(\log ^{-1 / (4-\alpha-\beta)}(n)\right)$, 
\begin{equation}\small
	\lambda_{a} \equiv N_{a} \lambda_{a}^{n}=\Theta\left(\frac{n W_{a}}{\log (n) L^{4-\alpha-\beta}}\right).
\end{equation}

$\bullet$ When $L=\mathrm{O}\left(\log ^{-1 / (4-\alpha-\beta)}(n)\right)$, the network throughput of ad hoc mode is
\begin{equation}\small
	\lambda_{a} \equiv N_{a} \lambda_{a}^{n}=\Theta\left(n W_{a}\right).
\end{equation}

Therefore, when $L=\Theta\left(\log ^{-1 / 4-\alpha-\beta}(n)\right)$, the network throughput of ad hoc mode is dominant, which is
\begin{equation}\small\label{eq54}
	\lambda_{a}=\Theta\left(n W_{a}\right).
\end{equation}

The result also shows that in this range of parameters, when $L$ breaks the boundary of $\Theta\left(\log ^{-1 / 4-\alpha-\beta}(n)\right)$, the network throughput will have nothing to do with $L$, and all the nodes will be in ad hoc mode.

\subsubsection{$\alpha >3$, $0 \le \beta <4$ and $\alpha + \beta >4$}
In this case, $E^{\prime}[X]$ is dominated by $E_{1}^{\prime}[X]$. The per-node throughput of ad hoc mode is 
\begin{equation}\small
	\begin{small}
		\lambda_{a}^{n} \equiv\left\{\begin{array}{cl}
			\Theta\left(\frac{W_{a}}{\log (n) r^{3}(n) L^{4-\beta}}\right) & L=\Omega\left(\left(\log ^{-1}(n) r^{-3}(n)\right)^{\frac{1}{4-\beta}}\right) \\
			\Theta\left(W_{a}\right) & L=\mathrm{O}\left(\left(\log ^{-1}(n) r^{-3}(n)\right)^{\frac{1}{4-\beta}}\right)
		\end{array}\right.
	\end{small}
\end{equation}

$\bullet$ When $L=\Omega\left(\left(\log ^{-1}(n) r^{-3}(n)\right)^{1 / 4-\beta}\right)$, the network throughput of ad hoc mode is 
\begin{equation}\small
	\begin{aligned}
		\lambda_{a} \equiv N_{a} \lambda_{a}^{n}&=\Theta(n) \cdot \Theta\left(\frac{W_{a}}{\log (n) r^{3}(n) L^{4-\beta}}\right)\\&=\Theta\left(\frac{n^{2} W_{a}}{\log ^{2}(n) L^{4-\beta}}\right).
	\end{aligned}
\end{equation}

$\bullet$ When $L=\mathrm{O}\left(\left(\log ^{-1}(n) r^{-3}(n)\right)^{1 / 4-\beta}\right)$, the network throughput of ad hoc mode is 
\begin{equation}\small\label{eq57}
	\lambda_{a} \equiv N_{a} \lambda_{a}^{n}=\Theta\left(n W_{a}\right).
\end{equation}

Therefore, when $L=\Theta\left(\left(\log ^{-1}(n) r^{-3}(n)\right)^{1 / 4-\beta}\right)$, the network throughput of ad hoc mode is dominant, which is 
\begin{equation}\small
	\lambda_{a}=\Theta\left(n W_{a}\right).
\end{equation}

The result shows that the increase of $\alpha$ influences the distribution of the destination nodes. When $L$ breaks the boundary of $\Theta\left(\left(\log ^{-1}(n) r^{-3}(n)\right)^{1 / 4-\beta}\right)$, the network throughput will no longer be related to $L$, and all the nodes will be in ad hoc mode.

\subsubsection{$\alpha >3$ and $\beta>4$}
In this case, $E^{\prime}[X]=\Theta\left(r^{3}(n)\right)$, and $E[F]=\log (n) E^{\prime}[X]=\mathrm{O}(1)$ when $n \rightarrow \infty$. 
Therefore, the per-node throughput of ad hoc mode is 
\begin{equation}\small
	\lambda_{a}^{n} \equiv \Theta\left(W_{a}\right).
\end{equation}

When $\alpha$ and $\beta$ are large, the destination nodes are close to the source nodes.
As a result, the number of hops are always smaller than $L$, and all of the nodes transmit in ad hoc mode.
In this case, the network throughput has nothing to do with $L$. The network throughput of ad hoc mode is
\begin{equation}\small\label{eq60}
	\lambda_{a} \equiv N_{a} \lambda_{a}^{n}=\Theta(n) \cdot \Theta\left(W_{a}\right)=\Theta\left(n W_{a}\right).
\end{equation}


{ In conclusion, the aggregation of destination nodes (i.e. the value of $\alpha + \beta$) is the main factor affecting the network throughput. When the distribution of destination nodes is sparse, the network throughput has complex relationships with hop threshold $L$, the number of nodes $n$, and the bandwidth $W_a$, as shown in \eqref{eq39}\eqref{eq42}\eqref{eq46}. When the destination nodes gather around the source nodes, the flows in the network are generally ad hoc flows. The hop threshold $L$ has limited impact on the network throughput, and the throughput is only positively related to the number of nodes $n$ and bandwidth $W_a$, as shown in \eqref{eq49}\eqref{eq54}\eqref{eq57}\eqref{eq60}.
Furthermore, as shown in \eqref{eq54}\eqref{eq57}\eqref{eq60}, when the destination nodes have strong aggregation to the source nodes, the network throughput will be in direct proportion to the product of $n$ and $W_a$.}

\section{Numerical Results and Analysis}
{ In this section, the theoretical results in Section III are verified by numerical results, and the relationship between parameters and network throughput is analyzed through the numerical results. Besides, the network throughput of 100 to 10000 UAVs is simulated by MATLAB to verify the rationality of the theoretical results.}

\begin{figure*}[!ht]
	\centering
	\includegraphics[width=0.99\textwidth]{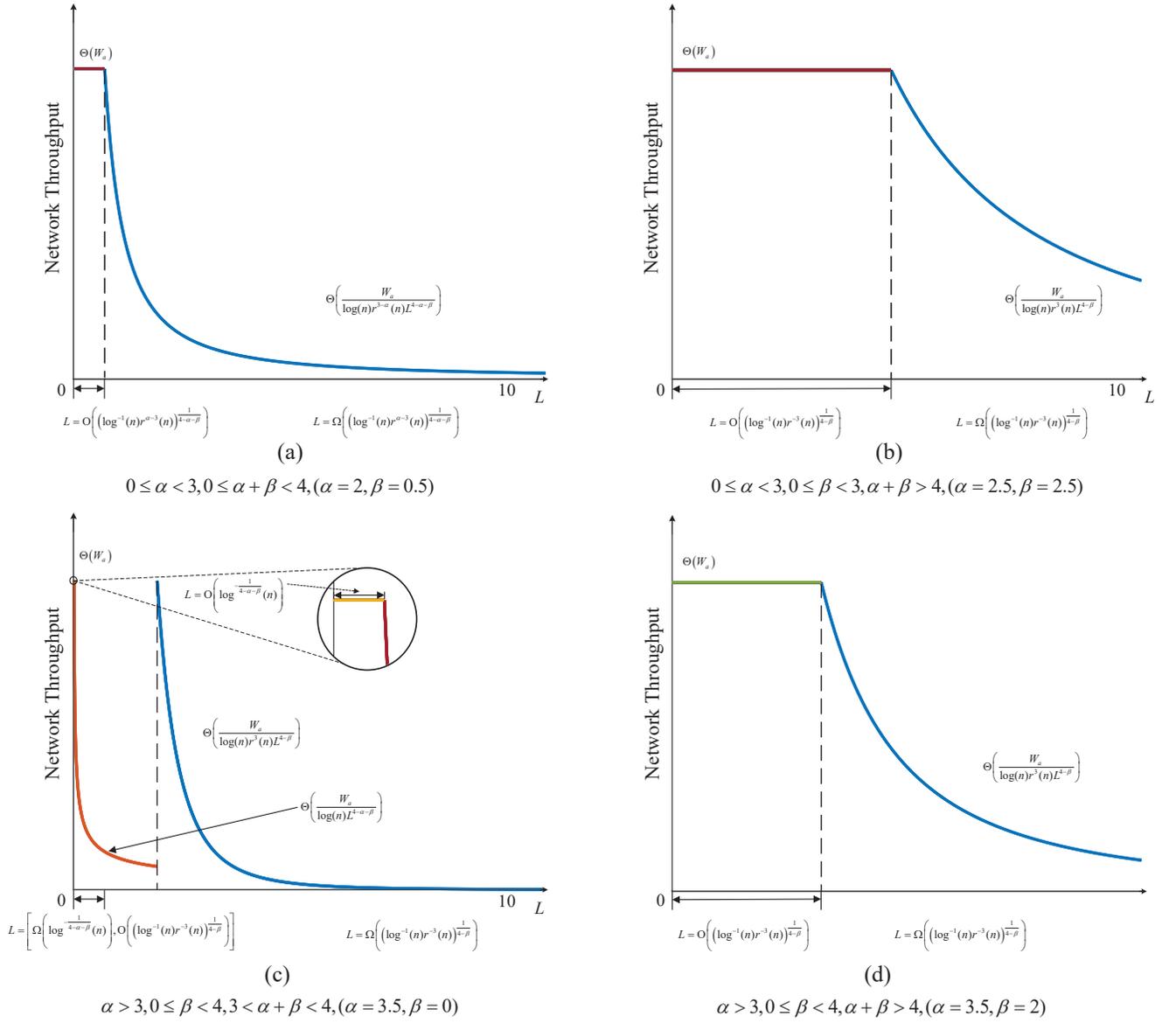}
	\caption{The relationship between $L$ and throughput of ad hoc network}
	\label{Combine_4}
\end{figure*} 
According to the theoretical results derived above, Fig. \ref{Combine_4} shows the relation between the threshold $L$ and the throughput with different values of $\alpha$ and $\beta$.
We consider four typical conditions, where $n = 100$, $W_a = 1$, and the optimal $L$ are identified.

In Fig. \ref{Combine_4}(a), where $0\le \alpha <3$ and $0 \le \alpha+\beta <4$, the optimal $L$ is relatively small.
Since the values of $\alpha$ and $\beta$ are small, there are more long-distance flows.
Besides, because the leader nodes have large contact groups, these long-distance flows are more likely to be sent by leader nodes and their hops are more likely to be larger than $L$, which means that most of the long-distance flows are cellular flows.
In this case, the throughput of the ad hoc network is dominated by flows of normal nodes.

In Fig. \ref{Combine_4}(b), where $0 \le \alpha<3$, $0 \le \beta <3$ and $\alpha + \beta >4$, the optimal $L$ is relatively large.
At this time, because $\alpha$ and $\beta$ are small, there are still many long-distance flows from the leader nodes.
However, since $L$ increases, the number of long-distance flows with ad hoc mode increases, and the throughput tends to be dominated by leader nodes.

In Fig. \ref{Combine_4}(c), where $\alpha>3$, $0\le\beta<4$ and $3<\alpha+\beta<4$, there are two optimal $L$, which shows that the value of $L$ determines the type of nodes that dominate the throughput.
When $L=\Omega\left(\left(\log ^{-1}(n) r^{-3}(n)\right)^{1 / (4-\beta)}\right)$, the throughput is dominated by leader nodes. When $L=\mathrm{O}\left(\left(\log ^{-1}(n) r^{-3}(n)\right)^{1 / (4-\beta)}\right)$ and $L=\Omega\left(\log ^{-1 / (4-\alpha-\beta)}(n)\right)$, the throughput is dominated by normal nodes.
Because $\alpha$ is large, the aggregation of contact groups of source nodes is high.
However, $\beta$ and $\alpha + \beta$ are still small. Thus, it is still possible for source nodes to communicate with contact group nodes with a long distance.
Due to the large number of contact group members of leader nodes, it is more likely that such long-distance flows will be sent by leader nodes.
It can be explained that when $L$ is large, more long-distance flows sent by leader nodes are transmitted by ad hoc mode, which dominates the network throughput.
When $L$ is small, leader nodes prefer cellular mode, so that the throughput of ad hoc flows is dominated by normal nodes.

In Fig. \ref{Combine_4}(d), as $\alpha$ increases, the aggregation of the contact groups is further improved. 
It is more likely that the number of hops of long-distance flows is less than $L$, so that the throughput of ad hoc flows is dominated by leader nodes again.

The theoretical results above show that there is an optimal value of $L$ to maximize the average throughput of UAV network, which is of great significance to the design of hybrid UAV network.
For example, when the number of UAVs and the capabilities of UAVs are determined, the parameters $\alpha$, $\beta$ and $\gamma$ can be determined by analyzing the routing table and the topological relation of UAV  network.
With such parameters, we can determine the value of routing strategy $L$ in hybrid UAV network to maximize the throughput of UAV network.

{ In order to verify the theoretical results through simulation, firstly, the Bat Algorithm (BA) algorithm is applied to generate a scale-free network which has the same setting as the models in Section II. Taking 100 nodes as an example, the contact groups and communication relationships of each node are shown in Fig. \ref{Add_pic_1} and Fig. \ref{Add_pic_2}. }

{ Fig. \ref{Add_pic_1} is the contact group selection when $n=100$, $\alpha = 1$, $\beta = 0.5$, and $\gamma = 2$. The UAVs are randomly distributed in the 3D space. The selection of the contact group members of the source nodes follows the power-law distribution with parameter $\alpha$. In the simulation of Fig. \ref{Add_pic_1}, the threshold $q_0$ is set to be 17.33. The source node of Fig. \ref{Add_pic_1}(a) is a leader node with 25 social group members. The source node of Fig. \ref{Add_pic_1}(b) is a normal node with 5 social group members.}

\begin{figure*}[!ht]
	\centering
	\includegraphics[width=0.99\textwidth]{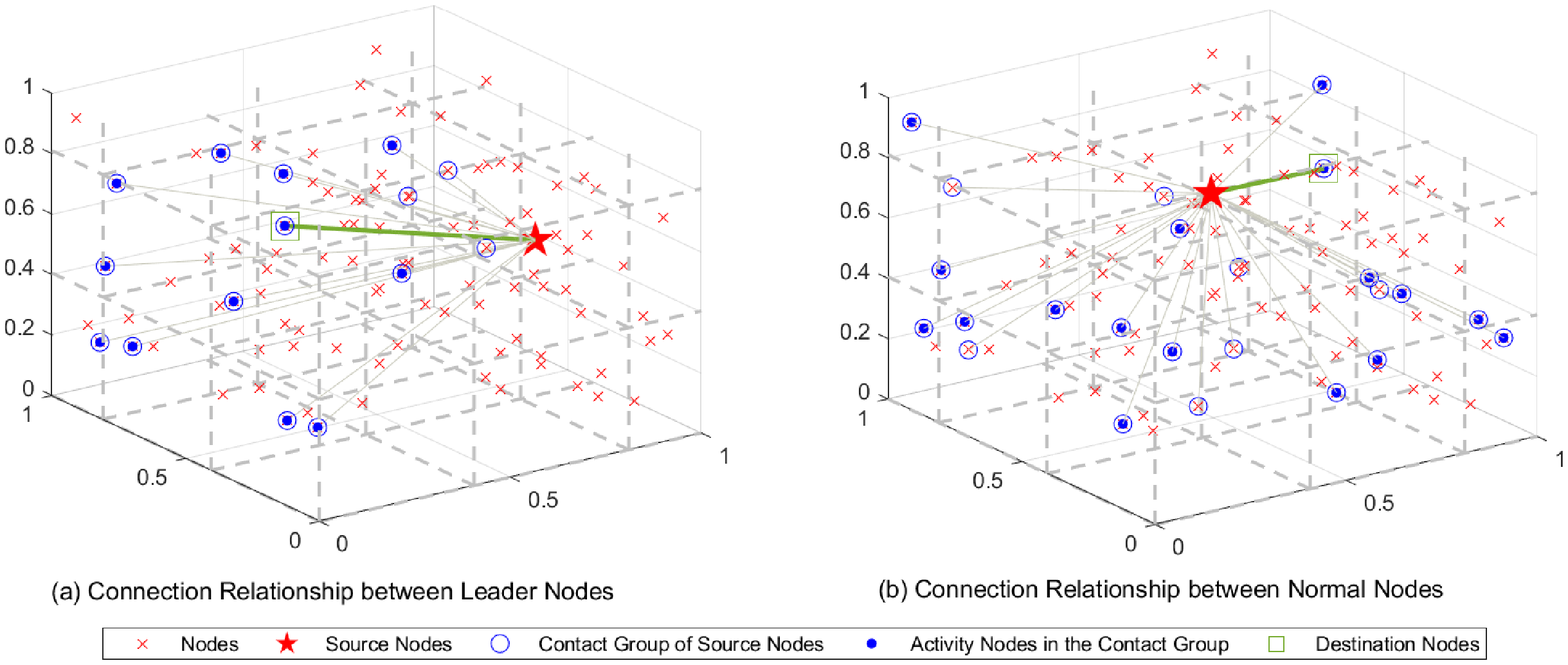}
	\caption{The contact groups of leader nodes and normal nodes}
	\label{Add_pic_1}
\end{figure*} 

{ Fig. \ref{Add_pic_2} illustrates the contact group selection related to parameter $\alpha$, the communication selection in the contact group related to parameter $\beta$, and the final communication relationship. The three sub-figures in Fig. \ref{Add_pic_2} are all directed graphs. The evolution process is revealed from the first sub-figure to the last sub-figure.}
\begin{figure*}[!ht]
	\centering
	\subfigure[Contact Group Selection]{\includegraphics[width=0.45\textwidth]{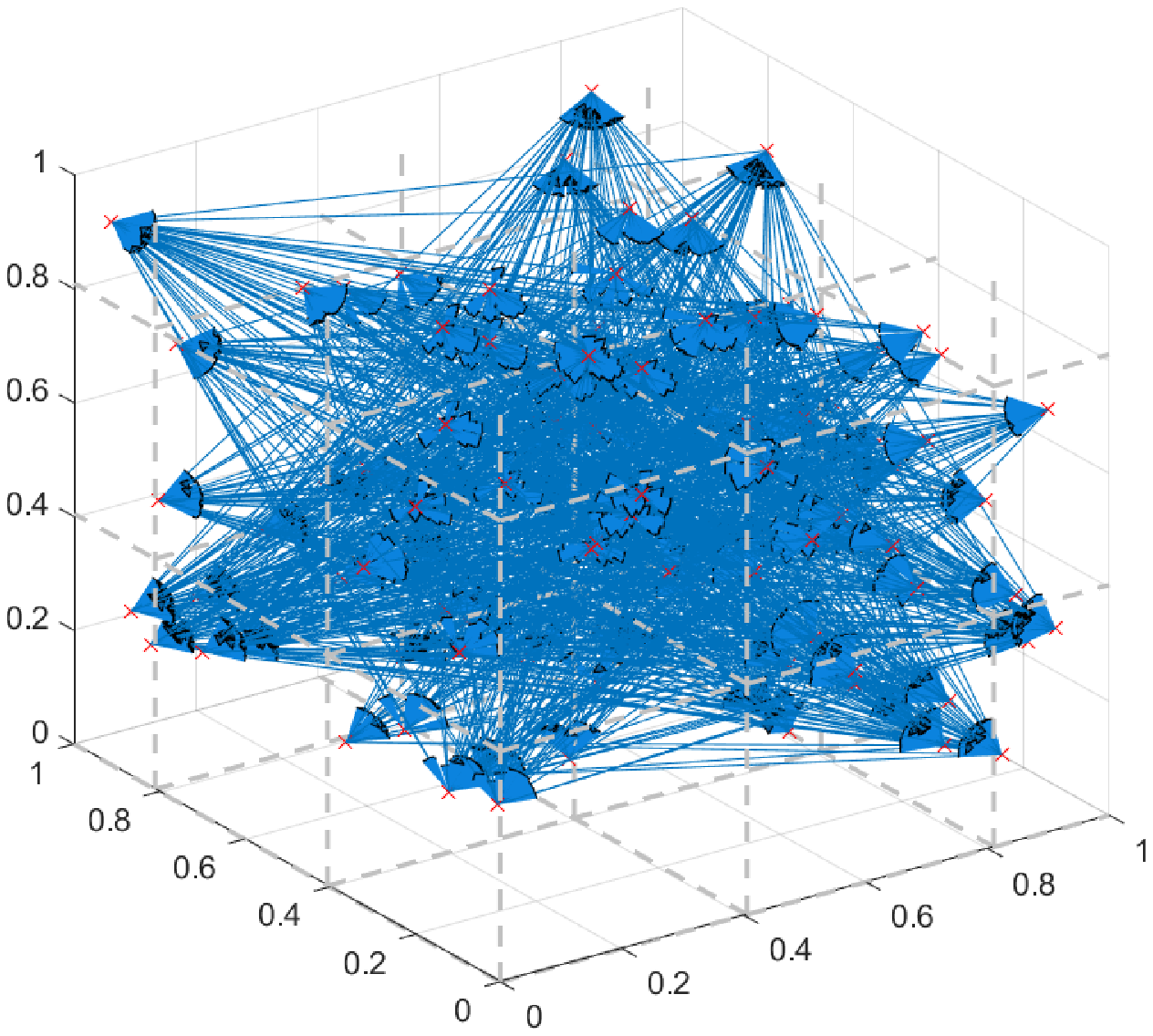}\label{6a}}
	\quad
	\subfigure[Communication Selection in the Contact Group]{\includegraphics[width=0.45\textwidth]{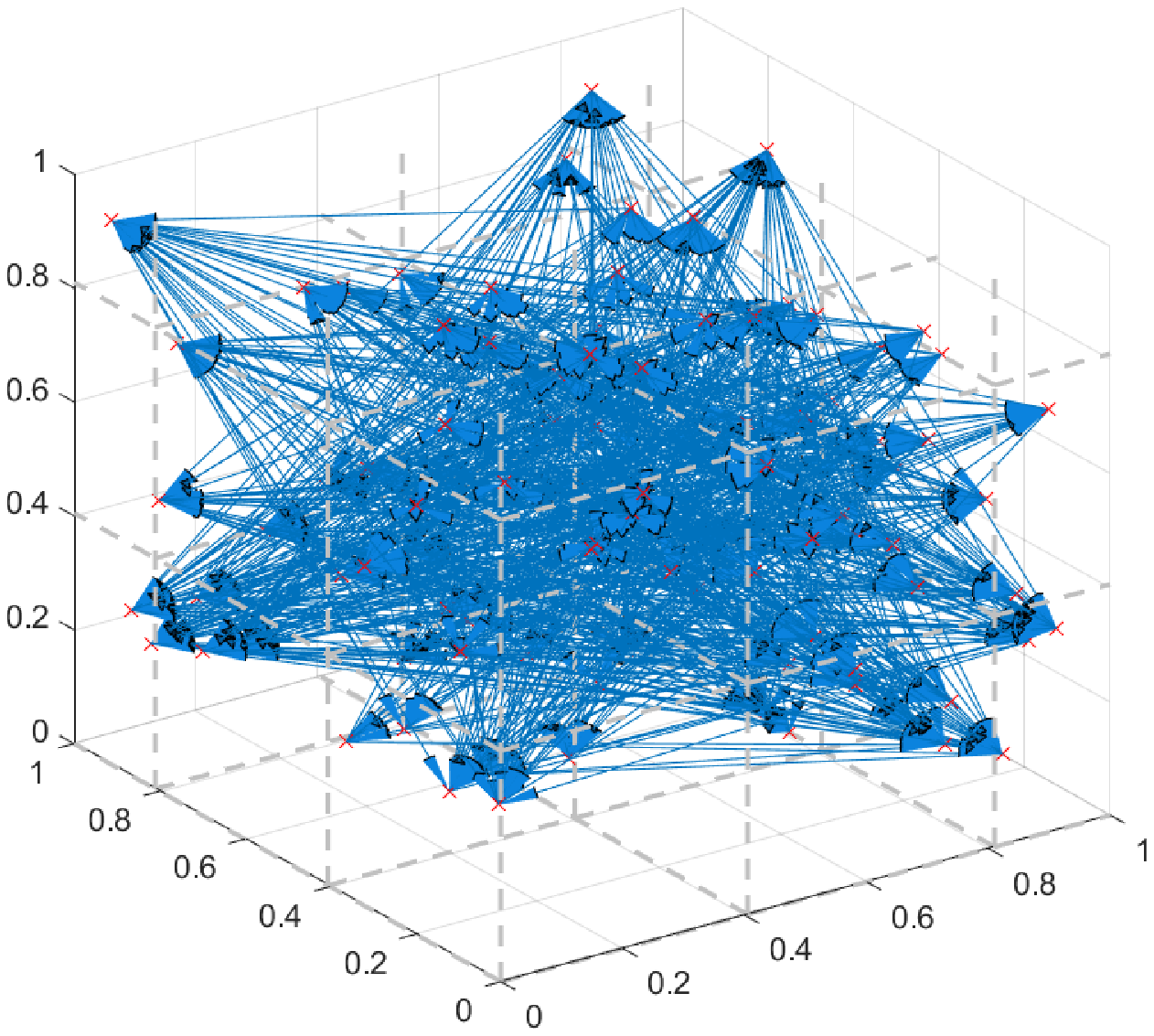}\label{6b}}
	\quad
	\subfigure[Final Communication Relationship]{\includegraphics[width=0.45\textwidth]{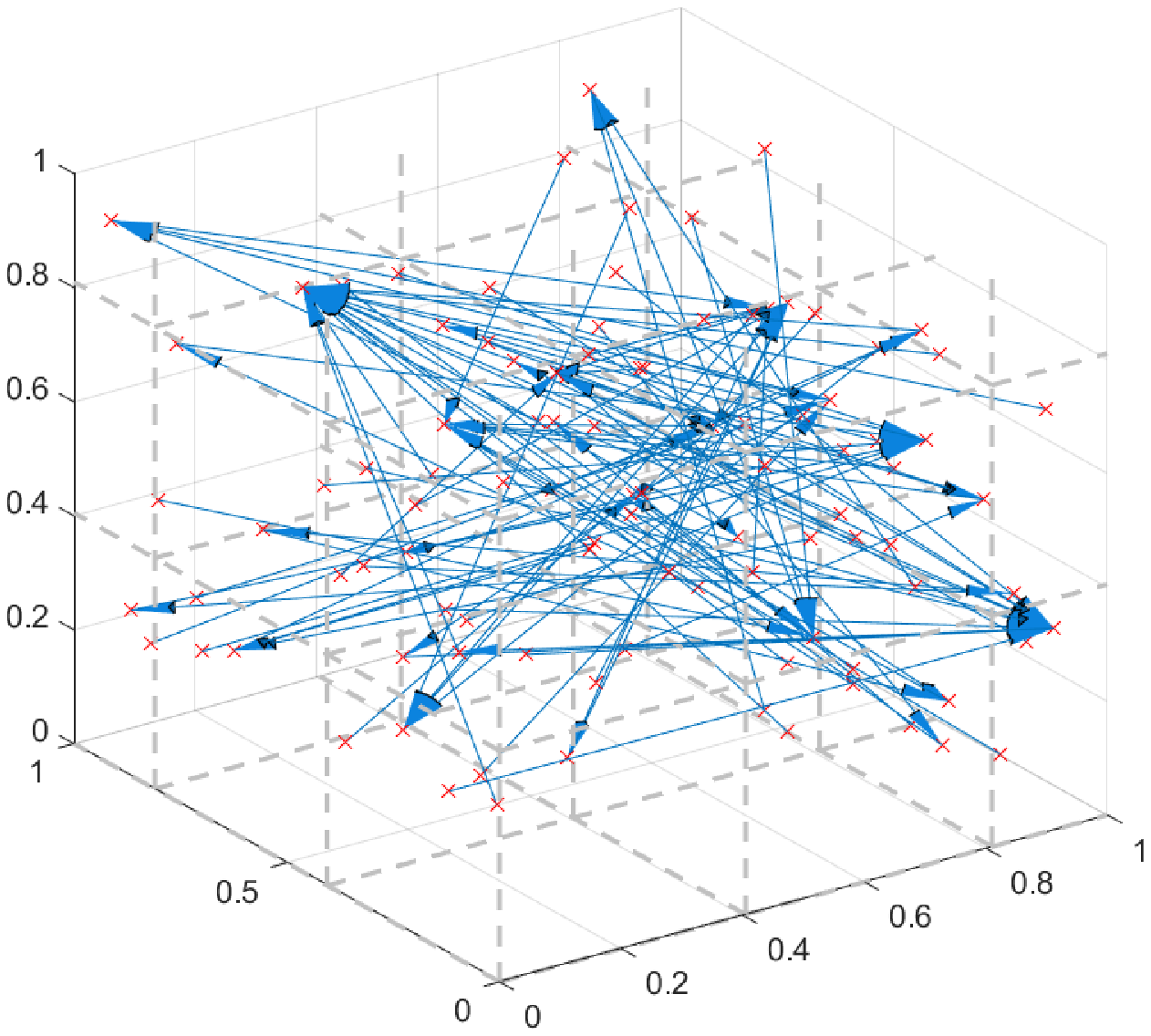}\label{6c}}
	\caption{The contact group selection, the communication selection in the contact group, and the final communication relationship}
	\label{Add_pic_2}
\end{figure*}

{ Then, according to the $L$-routing scheme, whether a transmission adopts ad hoc mode or cellular mode is determined. Taking the number of nodes $n$ as the variable, the results of average hops and throughput under different parameters of $\alpha$, $\beta$ and $\gamma$ are simulated, which is shown in Fig. \ref{AverageHops} and Fig. \ref{ThroughputCapacity}.}

Fig. \ref{AverageHops} illustrates the average number of hops of ad hoc flows under different parameters $\alpha$ and $\beta$.
Fig. \ref{ThroughputCapacity} shows the throughput of the ad hoc flows.
$L$ is selected in the optimal range to maximize the throughput of ad hoc flows, and the bandwidth of ad hoc mode is set to be $W_a = 1$.
There are the following observations.

\emph{1)} When the values of $\alpha$ and $\beta$, or the sum of them increase, the average number of hops of flows decreases correspondingly, and the throughput of UAV network increases.
This is due to the fact that $\alpha$ and $\beta$ affect the location distribution of the destination nodes from the source node.
When $\alpha$ or $\beta$ is large, or the sum of them exceeds a certain range, the destination nodes will be highly clustered around the source node, so that the average number of hops of ad hoc flows are reduced, and the network throughput increases accordingly.

\emph{2)} Within a certain range of $\alpha$ and $\beta$, the size of $L$ will affect the type of the dominant nodes. 
For example, when $0 \le \alpha<3$, $0 \le \beta <3$ and $\alpha + \beta >4$, the simulation results of the average number of hops and throughput are in good agreement with the theoretical results, which shows that the flows of normal nodes is dominant in the network.

\begin{figure}[!ht]
	\centering
	\subfigure[The relationship between the number of nodes and the average number of hops of ad hoc network]{\includegraphics[width=0.49\textwidth]{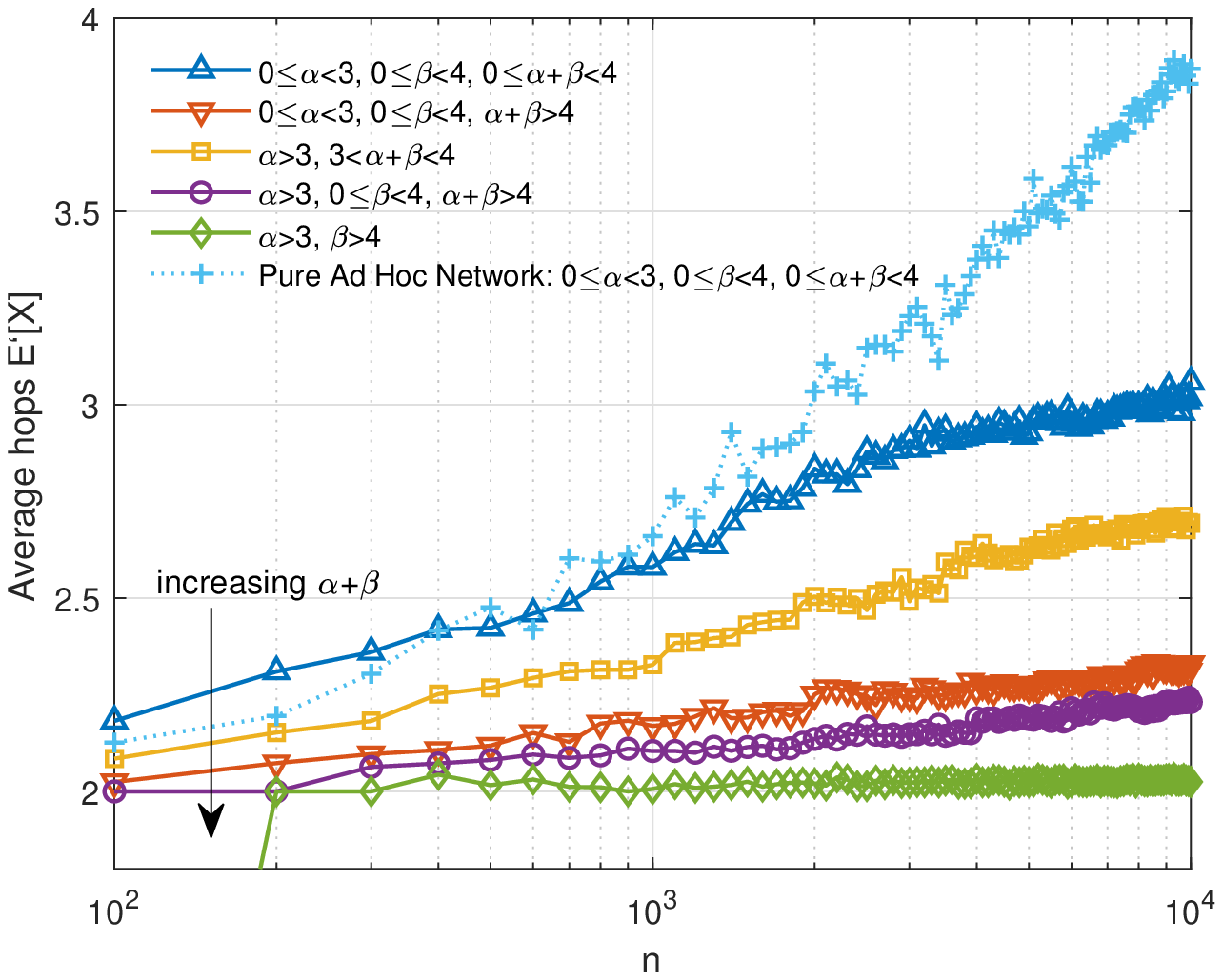}\label{AverageHops}}
	\quad
	\subfigure[The relationship between the number of nodes and throughput of ad hoc network]{\includegraphics[width=0.49\textwidth]{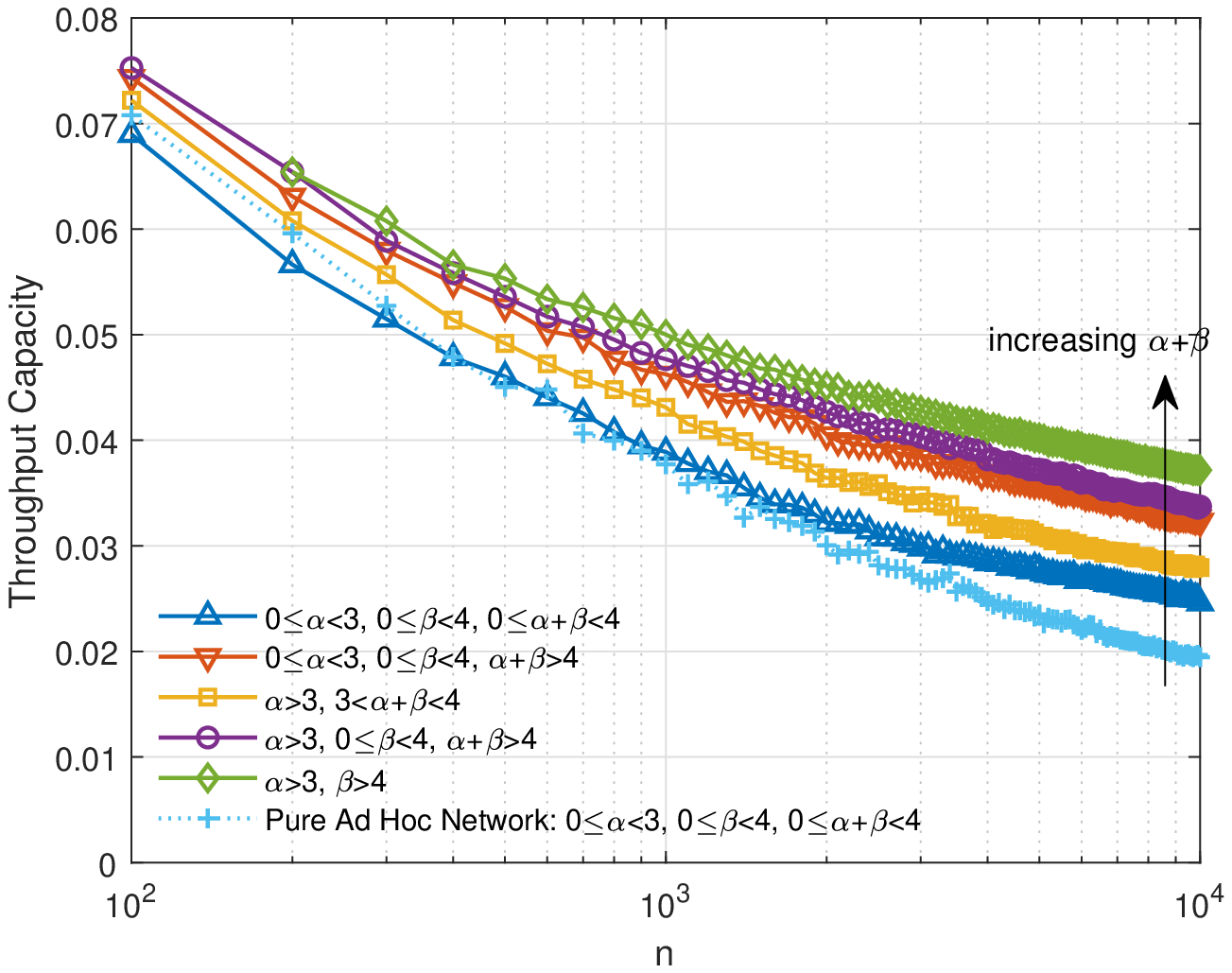}\label{ThroughputCapacity}}
	\caption{The relationship between parameters of UAV network}
	\label{Fig_Final}
\end{figure}

\emph{3)} The introduction of the cellular transmission mode improves the throughput of the scale-free UAV network, compared with the throughput of pure ad hoc network studied in \cite{Intro19}.
Fig. \ref{AverageHops} shows that as $n$ increases, the average number of hops of ad hoc flows in hybrid UAV
 network is smaller than that of pure ad hoc network.
Correspondingly, Fig. \ref{ThroughputCapacity} shows that as $n$ increases, the throughput of hybrid UAV network is higher than that of pure ad hoc network.
This is due to the fact that for the flows with the number of hops larger than $L$, the nodes will directly connect to BSs and exploit the resources of cellular network for transmission. 
Therefore, the number of ad hoc flows is reduced, and the resources of ad hoc network are saved, so that the network throughput is improved.

\section{Conclusion}
In this paper, aiming at improving the throughput of UAV network, the hybrid UAV network with scale-free topology is studied.
Besides, the impact of various parameters on the network throughput is analyzed.
The optimal hop threshold $L$ for the selection of ad hoc or cellular transmission mode is derived, which is a function of the number of nodes and scale-free parameters.
This paper will provide guidance for the architecture design and protocol design for the future UAV network.

\section*{Appendix A}
	According to \cite{Intro19} and law of large numbers (LLN), we have
\begin{equation}\small
	\frac{d_{k}^{-\alpha} \sigma_{q-1}\left(\mathbf{d}_{\mathbf{n}}^{\overline{\mathbf{k}}}\right)}{\sigma_{q}\left(\mathbf{d}_{\mathbf{n}}\right)}=\frac{q}{n}.
\end{equation}

Therefore, (\ref{eq_21}) and (\ref{eq_22}) can be simplified as
\begin{equation}\small
	\label{eq_26}
	\operatorname{Pr}_{1}^{a}=\sum_{x=1}^{L} \sum_{l=1}^{4 x^{2}+2} \sum_{o_{k} \in c_{l}} \sum_{q=q_{0}+1}^{n-1} \frac{q^{-\gamma+1} d_{k}^{-\beta}}{n \sigma_{1}(\mathbf{q}) \sigma_{1}\left(\mathbf{d}_{\mathbf{q}}\right)}.
\end{equation}
\begin{equation}\small
	\label{eq_27}
	\operatorname{Pr}_{1}^{c}=\sum_{x=L+1}^{r^{-1}(n)} \sum_{l=1}^{4 x^{2}+2} \sum_{o_{k} \in c_{l}} \sum_{q=q_{0}+1}^{n-1} \frac{q^{-\gamma+1} d_{k}^{-\beta}}{n \sigma_{1}(\mathbf{q}) \sigma_{1}\left(\mathbf{d}_{\mathbf{q}}\right)}.
\end{equation}

According to LLN, we have
\begin{equation}\small
	\frac{1}{q} \sigma_{1}\left(\mathbf{d}_{\mathbf{q}}\right)=E\left[\mathbf{d}_{\mathbf{q}}\right].
\end{equation}

Therefore,
\begin{equation}\label{eq_sigmadq}\small
	\sum_{q=q_{0}}^{n-1} \frac{q^{-\gamma+1}}{\sigma_{1}\left(\mathbf{d}_{\mathbf{q}}\right)}=\frac{1}{E\left[\mathbf{d}_{\mathbf{q}}\right]} \sum_{q=q_{0}}^{n-1} q^{-\gamma}.
\end{equation}

According to (\ref{E2}), we have
\begin{equation}\label{eq_28}\small
	\begin{aligned}
		E\left[\mathbf{d}_{\mathbf{q}}\right] \equiv E\left[\left\{d_{g_{1}}^{-\beta}, d_{g_{2}}^{-\beta}, \ldots, d_{g_{q}}^{-\beta}\right\}\right],
	\end{aligned}
\end{equation}
where $d_{g_{i}} (i=1,2,\dots, q)$ is the the distance between source and destination, which can be replaced by $xr(n)$, so we have

\begin{equation}\label{eq_29}\small
	\begin{aligned}
		E\left[\mathbf{d}_{\mathbf{q}}\right] &\equiv  \sum_{x=1}^{r^{-1}(n)} \operatorname{Pr}(X=x)(x r(n))^{-\beta}\\
		& = r(n)^{-\beta}.
	\end{aligned}
\end{equation}

Therefore, with (\ref{eq_sigmadq}) and (\ref{eq_29}), ${\rm Pr}_{1}^{a}$ can be simplified as
\begin{equation}\small
	\operatorname{Pr}_{1}^{a} \equiv \frac{r^{\beta}(n)}{n} \sum_{x=1}^{L} \sum_{l=1}^{4 x^{2}+2} \sum_{o_{k} \in c_{l}} d_{k}^{-\beta} \sum_{q=q_{0}+1}^{n-1} \frac{q^{-\gamma}}{\sigma_{1}(\mathbf{q})}.
\end{equation}
If $\gamma > 1$, $\sum_{q=q_{0}}^{n-1} q^{-\gamma}$ and $\sigma_{1}(\mathbf{q})$ are all partial sum of the Riemann Zeta function, which has the following relations.
\begin{equation}\small
	\label{eq_30}
	\sum_{q=q_{0}}^{n-1} q^{-\gamma} \leq \sigma_{1}(\mathbf{q}) \leq \zeta(\gamma) \equiv \Theta(1).
\end{equation}

Therefore, when $\gamma > 1$, according to (\ref{eq_30}), $q^{-\gamma}/\sigma_{1}(\mathbf{q}) = 1$. $\operatorname{Pr}_{1}^{a}$ can be derived as follows
\begin{equation}\small
	\label{eq_32}
	\operatorname{Pr}_{1}^{a} \equiv \frac{r^{\beta}(n)}{n} \sum_{x=1}^{L} \sum_{l=1}^{4 x^{2}+2} \sum_{o_{k} \in c_{l}} d_{k}^{-\beta}.
\end{equation}

For $0 \le \gamma \le 1$, it's obvious that
\begin{equation}\small
	\sum_{q=q_{0}+1}^{n-1} q^{-\gamma} = \Theta (\sigma_{1}(\mathbf{q})) = \Theta(n^{1-\gamma} /(1-\gamma)).
\end{equation}

Thus, $\operatorname{Pr}_{1}^{a}$ is still equivalent to (\ref{eq_32}) when $0 \le \gamma \le 1$, i.e., $\operatorname{Pr}_{1}^{a}$ has the same form when $\gamma$ varies.

In (\ref{eq_32}), $\sum_{l=1}^{4 x^{2}+2}\sum_{o_{k} \in c_{l}} (\cdot)$ represents the number of nodes in the small cubes with $x$ hops {on average}, which has the same order as
\begin{equation}\small
	\label{eq_33}
	N n(r(n))^{3}=\left(4 x^{2}+2\right) n(r(n))^{3}.
\end{equation}

Therefore, with (\ref{eq_33}) and Riemann integral, we have
\begin{equation}\small
	\begin{aligned}
		\operatorname{Pr}_{1}^{a} & \equiv \frac{r^{\beta}(n)}{n} \sum_{x=1}^{L} \sum_{l=1}^{4 x^{2}+2} \sum_{o_{k} \in c_{l}} d_{k}^{-\beta} \equiv(r(n))^{3} \sum_{x=1}^{L}\left(x^{2-\beta}+x^{-\beta}\right) \\
		& \equiv(r(n))^{3} \int_{1}^{L}\left(v^{2-\beta}+v^{-\beta}\right) d v.
	\end{aligned}
\end{equation}

Thus, the simplified form of $\mathrm{Pr}_{1}^{a}$ is as follows.
\begin{equation}\small
	\operatorname{Pr}_{1}^{a} \equiv\left\{\begin{array}{ll}
		\Theta\left(r^{3}(n) L^{3-\beta}\right) & 0 \leq \beta<3 \\
		\Theta\left(r^{3}(n) \ln L\right) & \beta=3 \\
		\Theta\left(r^{3}(n)\right) & \beta>3
	\end{array}\right.
\end{equation}

The simplified form of $\mathrm{Pr}_{1}^{c}$ can be derived similarly, which is
\begin{equation}\small
	\operatorname{Pr}_{1}^{c} \equiv\left\{\begin{array}{ll}
		\Theta\left(r^{\beta}(n)-r^{3}(n) L^{3-\beta}\right) & 0 \leq \beta<3 \\
		\Theta\left(r^{3}(n) \ln \left(\frac{1}{\operatorname{Lr}(n)}\right)\right) & \beta=3 \\
		\Theta\left(r^{3}(n)\right) & \beta>3
	\end{array}\right.
\end{equation}

\section*{Appendix B}
In (\ref{eq_pr2a}) and (\ref{eq_pr2c}), the term $d_k^{- \alpha }{\sigma _{q - 1}}\left( {{\bf{d}}_{\bf{n}}^{\overline {\bf{k}} }} \right)$ can be expanded as follows.

\begin{equation}\small
	\begin{aligned}
		d_k^{- \alpha }{\sigma _{q - 1}}\left( {{\bf{d}}_{\bf{n}}^{\overline {\bf{k}} }} \right) & = d_k^{- \alpha }\left( {{\sigma _{q - 1}}\left( {{{\bf{d}}_{\bf{n}}}} \right) - d_k^{- \alpha }{\sigma _{q - 2}}\left( {{\bf{d}}_{\bf{n}}^{\overline {\bf{k}} }} \right)} \right)\\
		& \le d_k^{- \alpha }{\sigma _{q - 1}}\left( {{{\bf{d}}_{\bf{n}}}} \right).
	\end{aligned}
\end{equation}

Hence, the upper bound of $d_k^{- \alpha }{\sigma _{q - 1}}\left( {{\bf{d}}_{\bf{n}}^{\overline {\bf{k}} }} \right)$ is $d_k^{- \alpha }{\sigma _{q - 1}}\left( {{{\bf{d}}_{\bf{n}}}} \right)$.
According to Lemma 4 in \cite{16}, when $q \le {q_0}$, we have

\begin{equation}\small
	\frac{{{\sigma _{q - 1}}\left( {{{\bf{d}}_{\bf{n}}}} \right)}}{{{\sigma _q}\left( {{{\bf{d}}_{\bf{n}}}} \right)}} \equiv \frac{1}{{{\sigma _1}\left( {{{\bf{d}}_{\bf{n}}}} \right)}}\Theta \left( {\frac{{nq}}{{n - q + 1}}} \right),
\end{equation}

and

\begin{equation}\small
	\Theta \left( {\frac{{nq}}{{n - q + 1}}} \right) = \Theta \left( q \right) = \Theta \left( 1 \right).
\end{equation}

Hence, when $q \le {q_0}$, $\Pr _2^a$ is equivalent to

\begin{equation}\label{eq_49}\small
	\begin{aligned}
		& \operatorname{Pr} _2^a \equiv \sum\limits_{x = 1}^L {\sum\limits_{l = 1}^{4{x^2} + 2} {\sum\limits_{{o_k} \in {c_l}} {\sum\limits_{q = 1}^{{q_0}} {\frac{{{q^{- \gamma }}d_k^{- \alpha  - \beta }{\sigma _{q - 1}}\left( {{{\bf{d}}_{\bf{n}}}} \right)}}{{{\sigma _1}\left( {\bf{q}} \right){\sigma _1}\left( {{{\bf{d}}_{\bf{q}}}} \right){\sigma _q}\left( {{{\bf{d}}_{\bf{n}}}} \right)}}} } } } \\
		& \equiv \sum\limits_{x = 1}^L {\sum\limits_{l = 1}^{4{x^2} + 2} {\sum\limits_{{o_k} \in {c_l}} {\frac{{d_k^{- \alpha  - \beta }}}{{{\sigma _1}\left( {\bf{q}} \right){\sigma _1}\left( {{{\bf{d}}_{\bf{n}}}} \right)}}\sum\limits_{q = 1}^{{q_0}} {\frac{{{q^{- \gamma }}}}{{{\sigma _1}\left( {{{\bf{d}}_{\bf{q}}}} \right)}}} } } }.
	\end{aligned}
\end{equation}

According to (44) in \cite{Intro17}, we have
\begin{equation}\small
	\sum\nolimits_{q = 1}^{{q_0}} {{{{q^{- \gamma }}} \mathord{\left/
				{\vphantom {{{q^{- \gamma }}} {{\sigma _1}\left( {{{\bf{d}}_{\bf{q}}}} \right)}}} \right.
				\kern-\nulldelimiterspace} {{\sigma _1}\left( {{{\bf{d}}_{\bf{q}}}} \right)}} \equiv {r^\beta }\left( n \right)},
\end{equation}
which is substituted into (\ref{eq_49}).
Using integral transformation techniques, we have

\begin{equation}\label{eq_420}\small
	\begin{aligned}
		& \operatorname{Pr} _2^a \equiv \frac{{{r^\beta }\left( n \right)}}{{{\sigma _1}\left( {\bf{q}} \right){\sigma _1}\left( {{{\bf{d}}_{\bf{n}}}} \right)}}\sum\limits_{x = 1}^L {\sum\limits_{l = 1}^{4{x^2} + 2} {\sum\limits_{{o_k} \in {c_l}} {d_k^{- \alpha  - \beta }} } } \\
		& {\rm{    }} \equiv \frac{{n{r^{3 - \alpha }}\left( n \right)}}{{{\sigma _1}\left( {\bf{q}} \right){\sigma _1}\left( {{{\bf{d}}_{\bf{n}}}} \right)}}\sum\limits_{x = 1}^L {\left( {{x^{2 - \alpha  - \beta }} + {x^{- \alpha  - \beta }}} \right)} \\
		& {\rm{    }} \equiv \frac{{n{r^{3 - \alpha }}\left( n \right)}}{{{\sigma _1}\left( {\bf{q}} \right){\sigma _1}\left( {{{\bf{d}}_{\bf{n}}}} \right)}}\int_1^L {\left( {{\upsilon ^{2 - \alpha  - \beta }} + {\upsilon ^{- \alpha  - \beta }}} \right)} d\upsilon.
	\end{aligned}
\end{equation}

According to (16) in \cite{Intro19}, we have

\begin{equation}\label{eq_44}\small
	{\sigma _1}\left( {{{\bf{d}}_{\bf{n}}}} \right) \equiv \left\{{\begin{array}{*{20}{l}}
			{\Theta \left( n \right){\rm{                   }}0 \le \alpha  < 3}\\
			{\Theta \left( {n\ln \left( {{r^{- 1}}\left( n \right)} \right)} \right){\rm{ }}\alpha  = 3}\\
			{\Theta \left( {n{r^{3 - \alpha }}\left( n \right)} \right){\rm{       }}\alpha  > 3}
	\end{array}} \right.
\end{equation}

Besides, there is the following relation.

\begin{equation}\label{eq_45}\small
	\begin{aligned}
		& {\sigma _1}\left( {\bf{q}} \right) = \sum\nolimits_{q = 1}^{n - 1} {{q^{- \gamma }}}
		& \equiv \left\{{\begin{array}{*{20}{l}}
				{\Theta \left( 1 \right){\rm{      }}\gamma  > 1}\\
				{\Theta \left( {{n^{1 - \gamma }}} \right){\rm{ }}0 \le \gamma  \le 1}
		\end{array}} \right.
	\end{aligned}
\end{equation}

Substituting (\ref{eq_420}) and (\ref{eq_44}) into (\ref{eq_45}), the values of $\Pr _2^a$
are revealed in {Table \ref{tab_Pr2a}}. Similarly, using the techniques of 
integral transformation, the values of $\Pr _2^c$
are revealed in {Table \ref{tab_Pr2c}}.

\end{document}